\documentclass[sigconf, nonacm]{acmart}
\AtBeginDocument{%
  }

\usepackage{listings}
\definecolor{codegray}{rgb}{0.5,0.5,0.5} 

\usepackage{url}
\usepackage{csquotes} 
\usepackage{wasysym} 

\newtheorem{theorem}{Theorem}
\newtheorem{lemma}{Lemma} 
\newtheorem{requirement}{Requirement} 

\begin{document}

\title{AutoTam: Specifying Secure Protocol Implementations with Tamarin Model Generation}


\author{Johannes Wilson}
\affiliation{%
  \institution{Sectra Communications}
  \city{Linköping}
  \country{Sweden}
}
\affiliation{%
  \institution{Linköping University}
  \city{Linköping}
  \country{Sweden}
}
\email{johannes.wilson@liu.se}

\author{Mikael Asplund}
\affiliation{%
\institution{Linköping University}
 \city{Linköping}
 \country{Sweden}}
\email{mikael.asplund@liu.se}

\author{Niklas Johansson}
\affiliation{
  \institution{Sectra Communications}
  \city{Linköping}
  \country{Sweden}
}
\affiliation{%
  \institution{Linköping University}
  \city{Linköping}
  \country{Sweden}
}
\email{niklas.johansson@sectra.com}


\begin{abstract}
Formal verification is a challenging but important task for ensuring the security of cryptographic protocols. While modern protocol verification tools significantly reduce verification effort, modelling remains challenging to practitioners without a background in formal verification. In addition, transferring verification results to a concrete protocol implementation requires expert knowledge.

In this paper, we present a novel language-first method for verification of trace properties using a domain-specific language for protocol implementations. We target the Tamarin prover for verification, and we prove that verified universal trace properties translate back to the implementation. We additionally integrate symbolic execution in order to analyse the memory safety of protocol implementations.
We use our tool to implement and generate accurate models for a signed Diffie-Hellman protocol, and for the WireGuard VPN protocol. Our WireGuard implementation is interoperable with existing implementations when using our interpreter, and achieves acceptable performance. We formally prove our implementations secure using a combination of symbolic execution and verification of the generated Tamarin models.
\end{abstract}

\begin{CCSXML}
<ccs2012>
   <concept>
       <concept_id>10003033.10003039.10003041.10003042</concept_id>
       <concept_desc>Networks~Protocol testing and verification</concept_desc>
       <concept_significance>500</concept_significance>
       </concept>
   <concept>
       <concept_id>10002978.10003014.10003015</concept_id>
       <concept_desc>Security and privacy~Security protocols</concept_desc>
       <concept_significance>500</concept_significance>
       </concept>
   <concept>
       <concept_id>10011007.10011006.10011050.10011017</concept_id>
       <concept_desc>Software and its engineering~Domain specific languages</concept_desc>
       <concept_significance>500</concept_significance>
       </concept>
 </ccs2012>
\end{CCSXML}

\ccsdesc[500]{Networks~Protocol testing and verification}
\ccsdesc[500]{Security and privacy~Security protocols}
\ccsdesc[500]{Software and its engineering~Domain specific languages}

\keywords{Protocol Verification, Cryptographic Protocols, Symbolic Protocol Models, Model Extraction, Software Verification}

\maketitle

\section{Introduction}
Cryptographic protocols are vital to the security of network communication.
The design of secure protocols has proven difficult, as demonstrated by numerous vulnerabilities discovered in both proposed and deployed protocols, such as in the original Needham-Schroeder protocol~\cite{Lowe1996NeedhamSchroeder}, prior versions of SSL/TLS~\cite{Bhargavan2014triplehandshake}, 
the EMV standard~\cite{basin2021emvtamarin}, and many others. To address these challenges, symbolic protocol verification based on a Dolev-Yao model~\cite{DolevYao} was developed 
as a method to formally verify security properties of protocols, proving security against strong attackers for a wide class of attacks. 
These methods can enjoy a great deal of automation, thanks to the development of several automated verification tools such as Tamarin~\cite{TamarinProver} and ProVerif~\cite{ProVerif}.
Today, the security of new proposed protocols is often argued using verifiable symbolic models, as was done with the WireGuard VPN protocol~\cite{donenfeld2017ndsswireguard} proposed in 2017, for which a security proof~\cite{donenfeld2017formalwireguard} was later made using the Tamarin prover.

The Dolev-Yao model, while practical for verification, relies on a highly abstract protocol model. Much effort has been spent on designing methods to bridge the gap between symbolic models and concrete protocol implementations, providing verification of high-level protocol security properties also for implementations. 
Despite the availability of advanced analysis tools and methods applicable to a wide variety of protocol implementations, industrial application remains low. 
This in part because the required verification expertise is significant. To improve adoption in industry, methods must be streamlined further in order to minimise the needed expertise. 

To streamline verification and minimize effort, while retaining a well-defined relationship with the formal model, a new approach is needed.
We propose a novel language-first approach, where the symbolic Dolev-Yao model and implementation language are tied together through the use of a domain-specific language. 
We motivate this in three ways. Firstly, a language designed to align with the abstraction level of high-level security properties ensures that there is a clear connection to the verified properties, creating formal guarantees which are more easily understood. Secondly, an appropriately designed language ensures that entire implementations can be symbolically modelled, including implementation state machines, as opposed to only verifying core segments like message parsing or formatting. This limits the need for expertise during the implementation phase to align the code with verification. Finally, a protocol-domain-specific language can naturally encode the symbolic model, allowing automatic and sound translation to a model, only requiring formal method expertise at the verification phase itself.  

The most similar existing approaches perform model extraction from a general-purpose language~\cite{bhargavan2008fsquaretoproverif, bhargavan2012defensivejs, Almousa2015SPS, Kobeissi2017ProScript, bhargavan2024hax, bhargavan2025protocolrust}. Typically, limits on loops and recursion must be imposed to retain a tractable model, and as a consequence most approaches leave some parts of the protocol state machine to manual modelling assumptions. We argue that for an implementation, it is desirable to additionally verify the state machine, considering that historical protocol vulnerabilities such as SKIP~\cite{Beurdouche2017statemachineTLS} were caused by incorrectly implemented state machines. ProScript explicitly disallows all forms of recursion or loops~\cite{Kobeissi2017ProScript}, while for SPS, the input language implicitly requires a linear protocol~\cite{Almousa2015SPS}. This limits the use of the tools to only verification of linear handshakes, and not repeatable message steps such as transport phases, nor more complex stateful protocols. While in theory fs2pv~\cite{bhargavan2008fsquaretoproverif} does not forbid recursion, recursion is discouraged for verification reasons and none of its case studies apply the tool to a non-linear protocol. Defensive JavaScript~\cite{bhargavan2012defensivejs} allows loops but faces other challenges in terms of generated model size~\cite{Kobeissi2017ProScript}. The verification framework hax~\cite{bhargavan2024hax, bhargavan2025protocolrust} provides automatic translation from annotated Rust functions to ProVerif. In this framework, a scenario, i.e. the initialisation of both roles, must be manually added to the resulting model.

In this paper, we present AutoTam, which consists of a domain-specific language in which protocol implementations can be written, symbolic models for verification can be extracted, and automated testing using symbolic execution can be performed. Using our tool, much of the task of producing a secure protocol implementation can be concentrated to writing the protocol implementation in AutoTam. The tool is not limited to linear protocols. In fact, we are able to encode arbitrary finite state machines to describe the message order. This guarantees we can also specify the unbounded transport phases of protocols. 

Compared to existing work, our approach is novel. Protocols can largely be defined by their protocol state-machine specifying what messages to accept in each state and how to transition between states, the parsing and message serialisation for each message, and the cryptographic operations to be performed in each state. We design our high-level domain-specific language with these components in mind, ensuring we can concisely encode each part in a way that follows the imperative style code that developers are used to. The resulting AutoTam protocol descriptions concisely describe implementation behaviour with enough detail to serve as reference implementations. We implement an interpreter for executing the protocol descriptions directly to allow for interoperability testing.
For verification, we implement a process for translating a set of AutoTam protocol descriptions into a Tamarin model. We have spent considerable effort into designing a translation that is both sound and produces readable and verifiable models.

We believe our tool will be useful during the development of new protocols.
Once AutoTam protocol role descriptions are written, symbolic verification of security properties for confidentiality and authenticity comes largely for free, as the tool produces verification-ready Tamarin models. Our soundness property relates such properties back to the protocol descriptions.
Furthermore, our tool can provide templates for common security properties, providing a way to automatically add them to the generated Tamarin model. This makes it easy for users to verify common confidentiality and authenticity properties.

Our language is designed with the soundness of translation in mind. Our soundness property states that a trace property proven in the generated model translates to a proof that the property holds also during execution. We provide a proof of the soundness of this translation under standard assumptions on cryptography according to the Dolev-Yao~\cite{DolevYao} symbolic protocol model. Appropriately selected security properties can rule out a large number of implementation errors, including state-machine bugs.

Finally, we fully integrate with a dynamic symbolic execution engine in a way that allows automated analysis of implementations written in AutoTam. In our case studies, this analysis produces complete coverage tests of the implementations when receiving arbitrary network input. This way, we do not have to assume the safe behaviour of our language interpreter or protocol implementations, since they can both be analysed using symbolic execution for memory safety.

\subsection{Contributions}

We can summarize the contributions of our work as follows:

\begin{itemize}
    \item We describe the design of the AutoTam language, a domain-specific language for cryptographic protocols which is fully executable by our interpreter.
    \item We present our procedure for extracting Tamarin models from the executable implementations, and provide a soundness proof of the translation.
    \item We present two case studies on using our language and tool to implement, extract models for, and prove the security of a signed Diffie-Hellman key exchange protocol and the WireGuard VPN protocol.
    \item We provide a fully integrated symbolic library for performing analysis of implementations using symbolic execution, allowing us to rule out most types of memory errors in our case study protocols.
\end{itemize}

\section{Background} \label{sec:autotam_background}

In this section we give a short introduction to the ideas of symbolic protocol verification and the Tamarin prover, a cryptographic protocol verifier in the symbolic model which uses heuristics to find proofs or counterexamples to security properties.

\subsection{Symbolic Protocol Verification} \label{subsec:sym_protocol_verif}

Symbolic protocol verification follows work originally by Dolev and Yao~\cite{DolevYao} where cryptographic functions are represented symbolically and the adversary is modelled as an active network adversary. In contrast to a computational model, a symbolic model does not model concrete bitstrings. Instead, messages, variables, role names and keys are all represented as abstract terms, making verification much more amenable for automation. 

Each protocol role will describe a series of steps, where each step may interact with the global protocol trace. In the Dolev-Yao model~\cite{DolevYao}, the adversary is an active network adversary and may intercept, remove, duplicate, and construct its own messages on the network. Security properties can be formulated as trace properties on the global network trace. For example, secrecy properties can be expressed as claims that no possible trace exists where the adversary will be able to directly derive a message term which was at some point believed to be secret. A proof of such a property must then show that no matter the order in which the protocol roles perform the protocol, and no matter how the adversary modifies or constructs messages, the property always holds. Such verification problems are known to be undecidable in the general case when protocol participants can generate new unique values~\cite{mitchell1999undecidability, durgin2004multiset}. Therefore, automated provers must rely on heuristics or perform bounded verification.

\subsection{Tamarin}

Tamarin is a state-of-the art symbolic protocol verification tool which has been used in formal verification of several important security protocols, such as TLS~\cite{cremers2016tls13, cremers2017tls13}, the EMV standard~\cite{basin2021emvtamarin}, and 5G Key-Agreement~\cite{cremers2019formal5GAKA}. Tamarin can perform unbounded verification, meaning properties can be verified for an arbitrary number of protocol runs.
Tamarin provides a very general way to model stateful protocols in the form of multiset rewrite rules, where each rewrite rule may affect the global trace and adversary knowledge~\cite{meier2013advancing}.

We write a rewrite rule on the form [\textit{l}]\texttt{-}[\textit{a}]\texttt{->}[\textit{r}], where multisets \textit{l} and \textit{r}, containing terms known as \textit{facts}, represents the pre- and post-condition of the rule, while the multiset \textit{a} contains terms known as \textit{actions}, describing events visible on the global protocol trace. Application of a rule requires that there exists some instantiation of the variables in \textit{l} which matches available state facts of the global protocol state. 

Cryptographic operations are modelled according to an equational theory, for example describing the relationship between encrypted and decrypted terms. 
The network channel is modelled using special facts \texttt{In} and \texttt{Out} whose contents describe the input and output from the network respectively. The adversary may perform any publicly known operations such as encryption and decryption as described by the equational theory of the model. Tamarin supports types to distinguish public variables, prefixed with a \$ character or within single quotes in case of a constant, and fresh variables, prefixed with a $\sim$ character. Public variables are known to the adversary, while fresh variables may only be generated by the rule \texttt{Fr}. \texttt{Fr} allows modelling uniquely generated numbers and private keys. We show an example Tamarin rule below.

\begin{verbatim}
rule RuleName:
    let c = <a, ~b> in
    [ In(a), Fr(~b) ]
    --[ ActionName(a, ~b) ]->
    [ StateNext(c), Out(c) ]
\end{verbatim}

The rule accepts some term \texttt{a} from the network and generates some fresh term $\sim$\texttt{b}. A term \texttt{c} is defined as the concatenation of \texttt{a} and \texttt{b}. The term \texttt{c} is stored in a produced state fact \texttt{StateNext}, and also output to the network. The rule also produces an action fact \texttt{ActionName}. An action fact is not part of the state but is rather used to track the protocol trace. The action fact \texttt{ActionName} associates an application of the rule \texttt{RuleName} with its terms \texttt{a} and \texttt{b}.

In Tamarin, security properties are specified as guarded first order logic formulas on action facts and timepoints. This property specification language allows specifying very general authentication and secrecy properties. Such formulas are called lemmas within Tamarin.  
A special type of lemma known as a restriction can enforce protocol invariants, which may be used in order to implement equality and inequality of terms appearing in rules. We will use two special action facts, \texttt{Eq} and \texttt{NEq}, both with arity two, in order to signify that two terms appearing in a rule must be equal or distinct, respectively.

\section{AutoTam Overview} \label{sec:autotam_overview}

AutoTam is a tool for both interpreting our own domain specific language, the AutoTam language, and for generating Tamarin models from the same language.
We now proceed to present the features of the AutoTam domain specific language and the AutoTam tool. An overview of the AutoTam architecture and its workflow is shown in \figurename~\ref{fig:autoam_overview}. AutoTam operates in three modes: interpreter mode, translation mode, and symbolic execution mode. In interpreter mode, an AutoTam protocol role description written in the AutoTam language is parsed and executed, while in translation mode AutoTam protocol role descriptions are translated to a Tamarin model. In symbolic execution mode, we use the KLEE dynamic symbolic execution engine~\cite{cadar2008kleecoverage} to generate high-coverage test cases for the protocol implementation when executed by our interpreter.

\begin{figure}
    \centering
    \includegraphics[width=0.9\columnwidth]{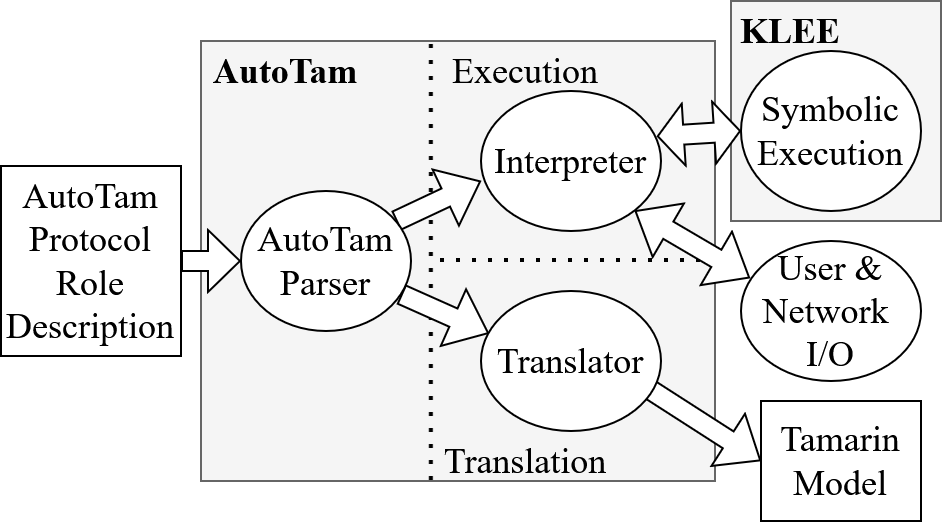}
    \caption{AutoTam Architecture and Workflow.}
    \Description{A series of boxes and ellipses illustrating the workflow of the AutoTam tool. At the far left, a box is labelled AutoTam protocol role description. It leads to an ellipse labelled AutoTam Parser, which in turn leads to both ellipses Interpreter and Translator. AutoTam Parser, Interpreter, and Translator are all contained in a larger box labelled AutoTam. Interpreter connects with two-way arrows to both ellipses Symbolic Execution and User \& Network I/O. Symbolic Execution is contained within a box labelled KLEE. Finally, Translator leads to a box labelled Tamarin Model.}
    \label{fig:autoam_overview}
\end{figure}

An AutoTam protocol role description describes the control flow, cryptographic operations, and network operations performed by a role in the protocol. The protocol role description files are written in the AutoTam language. We present the syntax and execution semantics of the AutoTam language in Section~\ref{sec:language_definition}.
In interpreter mode, AutoTam will accept a single protocol role description file, describing one of the protocol roles. During execution, AutoTam will interface with the user through a message passing mechanism, and with the network through calls to system libraries for reading and writing socket file descriptors.

In translation mode, described in detail in Section~\ref{sec:translation}, AutoTam may instead accept several protocol role description files that together describe all roles in the protocol. In addition, Tamarin lemmas can be built using pre-defined templates for secrecy and authenticity properties. The protocol role description files will then be translated into a single Tamarin file ready for verification.

Symbolic execution mode, described in Section~\ref{sec:symbolic_execution_method}, is achieved by compiling the AutoTam interpreter into LLVM bytecode for analysis using the dynamic symbolic execution engine KLEE~\cite{cadar2008kleecoverage}. The execution traces of the interpreter executing an AutoTam protocol role description file can then be systematically explored under simplifications of the cryptographic operations using KLEE.

\section{AutoTam Language} \label{sec:language_definition}

In this section we present the syntax and semantics of the AutoTam language. The AutoTam language is protocol domain specific.
The language is designed to be both simple and intuitive, yet expressive enough to implement modern cryptographic protocols.
We design our language to support arbitrary protocol state machines while maintaining a natural translation into a symbolic model.

\subsection{Language Control Flow} \label{subsec:lang_control_flow}

Our language design is focused specifically to provide a natural way to describe the state machines of protocols, since complexity in protocol state machines can be a cause of implementation errors~\cite{Beurdouche2017statemachineTLS}. 

To ensure we can soundly extract a model from our implementation, we limit control-flow to a high-level protocol state machine only, avoiding undecidable reachability problems when extracting the possible input-output relations of messages.
This does not in general limit the number of protocols which are possible to implement using our language, since any intra-procedural choice can be implemented as a choice that is part of the protocol state machine.

An AutoTam protocol role description consists of \textit{states} and \textit{transitions}. These states and transitions in turn contain sequences of statements. The transitions encode the state machine of a protocol.
A state may connect with multiple outgoing and incoming transitions. We allow the resulting protocol state machine to form cycles, as this is necessary for implementing protocol transport phases.
The statements of the states and transitions describe the operations to perform during execution. The states are further divided into two parts, an output part and an input part. The general syntax is shown in Listing~\ref{lst:syntax}. A formal description of the syntax is given in Listing~\ref{lst:syntax_backusnaur} in the Appendix.

\begin{lstlisting}[numbers=none, numbersep=4pt, frame=lines, language={C}, basicstyle=\ttfamily\small, caption={AutoTam Syntax}, label={lst:syntax}, commentstyle=\color{codegray}, captionpos=b]
[StateName] {
    Output{
        [Statements]
    }
    Input {
        [Statements]
    }
}
Transition ([FromState], [ToState]) {
    [Statements]
}
\end{lstlisting}

The program may only transmit messages over the network in the output part of a state and may only receive input in the input part. States must be structured so that the output part lies before the input part. Such restrictions on the placement of network operations ensure that each transition in the program code can be safely translated into a single rule which represents a given protocol step. The procedure for translation will be described in Section~\ref{sec:translation}.

We illustrate how the states and transitions build up the program control flow in \figurename~\ref{fig:autotam_states}. In the figure, the output part and input parts of states are represented by the blue left and orange right semicircles respectively. The transitions are represented by the white arrows. The gray boxes which cover both parts of states and transitions represent a list of statements which should be executed for each part of the program. Such lists may be empty.

A failure during the execution of a transition, for example a failure to parse a message according to a message format, will end the execution of the transition, returning the execution of the program to before the transition. The interpreter will then attempt to execute any alternative transitions which are available from the state, following in the order that the transitions are defined. This is the mechanism by which it is possible to specify the conditions under which a transition should or should not be taken. 

Typically, each transition would start by parsing a message received in the preceding state. If multiple messages with different formats may be received at some point in the protocol, then each message would have its own transition, and during execution each of them would be tried. We may view the transitions as providing the branching control flow of our language, but at the abstraction level of the protocol. On the other hand, if an error is encountered during the execution of a state, it is unrecoverable and execution will terminate. 

\begin{figure}
    \centering
    \includegraphics[width=0.9\columnwidth]{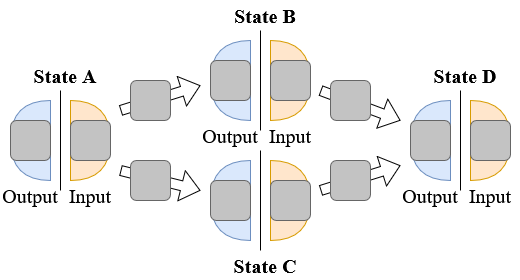}
    \caption{Illustration of AutoTam States and Transitions}
    \Description{A diagram with four states, A, B, C, and D, each drawn as two semicircles, with a thin black line in-between them. The left semicircle is labelled Output while the right is labelled Input. Each semicircles contains a grey rectangle. The states are connected with arrows, going left to right: one from A to B, A to C, B to D, and C to D. Each arrow also has a grey rectangle on top of it.}
    \label{fig:autotam_states}
\end{figure}

We exemplify the traces from \figurename~\ref{fig:autotam_states}. With state names written as $A_O$ for output part of state A, and $T_{AB}$ for the transition from A to B, an example execution trace could proceed as follows: $[A_O,A_I,T_{AB},B_O,B_I,T_{BD},D_O,D_I]$. Alternatively, if an error occurs during execution of transition $T_{AB}$, execution may instead proceed as $[A_O,A_I,T_{AB},T_{AC},C_O,C_I,T_{CD},D_O,D_I]$, where $T_{AB}$ is not fully executed.

To retain soundness during the translation to a Tamarin model which we will present in Section~\ref{sec:translation}, we must impose the following requirements on valid AutoTam programs. These restrictions ensure that a model can fully capture the behaviours of the executable program and are necessary for our soundness proof in Section~\ref{subsec:soundness_translation}.

\begin{requirement} \label{req:no_live_tran}
    A transition cannot assign values to variables which are live (in the usual meaning of variable liveness, see e.g.~\cite{alfred2007compilers}) at the beginning of another transition starting from the same state, as they may get overwritten on failure.
\end{requirement}

\begin{requirement} \label{req:no_stop_out}
    The output part of a state cannot contain statements which may encounter non-network errors, including equality tests, AEAD decryption MAC validation, and signature verification.
\end{requirement}

\subsection{Variables and Functions} \label{subsec:language_variables}

Rather than providing a set of primitive types and operations on these types, our language operates directly on the abstraction level of cryptographic operations and network operations. The only typing in the language exists to track the sizes of variables, so that only variables of the correct size are used in functions as part of ensuring memory safety at runtime. 

All statements in the language consist of a call to some predefined function. 
A function may assign multiple variables and take some other variables as parameters. All statements have the form 
\texttt{v1,...,vn = FunctionName(p1,...,pm)}
where variables \texttt{v1} - \texttt{vn} are assigned by the function, while \texttt{p1} - \texttt{pm} are function parameters.

Available functions in the language describe either operations on network I/O, cryptographic operations, message operations such as message concatenation or division, or user I/O operations. 
The functions themselves are implemented as C functions. During execution the interpreter will first make sure that the sizes of parameters and assigned variables are correct at runtime before calling the appropriate function implementation from a lookup table. Listing~\ref{lst:autotam} shows an example code snippet with two states and a transition between them. 

\begin{lstlisting}[numbers=none, numbersep=4pt, frame=lines, language={C}, basicstyle=\ttfamily\small, caption={Running Example AutoTam Language Protocol Role Description}, label={lst:autotam}, commentstyle=\color{codegray}, captionpos=b]
State1{
    Input{ 
        a = recv(fd,10) 
    }
} Transition (State1, State2) {
    b, c = split(a, 5)
    RequireSize(c, 5)
} State2{
    Output{
        d = GenSecret(5)
        e = Concat(c,d)
        send(fd,e)
    }
    Input{
        f = recv(fd,20)
    }
}
\end{lstlisting}

We will use the code in Listing~\ref{lst:autotam} as a running example throughout the rest of the paper. In Listing~\ref{lst:autotam}, executing \texttt{State1} would call the \textquote{recv} function, reading \texttt{fd} as a socket file descriptor, for at most ten bytes. If completed without error, execution of \texttt{State1} would finish, and the interpreter would look for transitions starting from \texttt{State1}. Assuming the transition \texttt{State1} to \texttt{State2} is the only transition, execution would continue in the transition by calling the split function, followed by a length check. If the variables parsed both had the appropriate length of five bytes, the transition would have successfully completed, transferring execution to \texttt{State2}. In \texttt{State2}, a new random number of five bytes would be generated, which would be concatenated with the five bytes of \texttt{c} parsed earlier, which would then be sent over the same socket. Finally, execution of \texttt{State2} concludes with waiting for the next 20 bytes from the socket.

For the implementation of the cryptographic functions, we used the Cryspen HACL Packages\footnote{\url{https://github.com/cryspen/hacl-packages}} which builds upon the HACL~\cite{zinzindohoue2017hacl, polubelova2020haclxn} formally verified cryptographic library.
The cryptographic functions which we use in this paper are encryption and decryption using the Chacha20 Poly1305 AEAD algorithm used by WireGuard, EC25519 Diffie-Hellman key derivation, Ed25519 signatures, and Blake2s hashes.
Additional pre-defined functions which we provide include message operations such as concatenation and splitting, network I/O, and equality comparisons. For a list of all available pre-defined functions, we refer to the left-hand side of Table~\ref{tab:dsl_translation} in the Appendix.

\section{Translation to Tamarin} \label{sec:translation}

In Translation mode, AutoTam will take several protocol implementation files written in the AutoTam language and generate a single complete Tamarin model which models the behaviour of all involved protocol roles and their interactions with the network. 

In this section, we present the translation process from the AutoTam language to Tamarin models, first describing the translation of control flow in Section~\ref{subsec:translation_rules}, and then presenting the translation of individual statements in Section~\ref{subsec:translation_statements}. In Section~\ref{subsec:soundness_translation} we present our soundness theorem and its proof.

\subsection{Translation to Rules} \label{subsec:translation_rules}

The execution semantics of the input language described in Section~\ref{sec:language_definition} can be directly translated into the trace-based semantics of Tamarin. For each defined transition in the input language, our translator will output a single Tamarin rule. Translation to a rule is performed for an \textquote{extended transition}, which also includes the input part of the previous state and the output part of the succeeding state.

In the running example in Listing~\ref{lst:autotam}, a transition starts from \texttt{State1}, leading to \texttt{State2}. 
The translation of the transition from \texttt{State1} to \texttt{State2} will begin at the input part of \texttt{State1}, include all its statements, continue through all the statements in the transition, and finally conclude with the statements in the output part of \texttt{State2}. The resulting Tamarin rule is presented in Listing~\ref{lst:tamout}. 

\begin{lstlisting}[numbers=none, numbersep=4pt, frame=lines, language={C}, basicstyle=\ttfamily\small, caption={AutoTam Running Example Tamarin Output}, label={lst:tamout}, commentstyle=\color{codegray}, captionpos=b]
rule Role1_State1_to_State2:
    let a = <b, c>
        e = <c, ~d>
    in
    [ State1(fd), In(a), Fr(~d) ]
  --[ A_Role1_State1_State2(a,b,c,~d,e)]->
    [ State2(fd,b,c,~d,e), Out(e) ]
\end{lstlisting}

In addition to the translated individual statements, the produced rule includes state facts in the precondition and postcondition, and an action fact to add to the program trace, as the action facts are necessary for writing lemmas. For our running example, the generated precondition contains the state fact for \texttt{State1}, while the postcondition contains the state fact for \texttt{State2} in order to maintain the execution trace semantics of the implementation.

The Tamarin state facts represent the point of execution in the implementation right in the middle of each state, between execution of the output and input parts of each state. In \figurename~\ref{fig:autotam_states}, these points are represented by the black vertical lines in the middle of states. In the example rule, the state fact \texttt{State2} represents the point of execution after executing the output part statements of \texttt{State2} in the implementation, but before executing any of the statements in the input part of \texttt{State2} in Listing~\ref{lst:autotam}.

As noted by Cheval et al.~\cite{cheval2022sapic}, the number of Tamarin rules in a model can have a significant impact on verification time, and simply joining together compatible rules can reduce verification time considerably.
In a sense, our tool naturally allows the optimisation method described in the extended version~\cite{cheval2022sapicextended}. Instead of producing a rule for each step of execution and then merging compatible rules as far as possible while maintaining the trace semantics, our language makes it natural to start with a large execution step and only subdivide it where strictly necessary.
Our own experience with generating a model for the WireGuard protocol seems to indicate that also the complexity of individual rules can have significant impact on verifiability, suggesting that sometimes it is favourable to divide large and complex rules into smaller parts, meaning there might be a sweet spot for the ideal division of statements into rules. Since our language allows the user so themselves decide where to create the divisions between states, it is possible to experiment with different divisions directly in the code.

\subsection{Translation of Individual Statements} \label{subsec:translation_statements}

All callable functions are implemented as both an executable implementation and a Tamarin translation function. The translation function defines how a statement in the AutoTam language should be integrated into a Tamarin rule. Such a translation function may either syntactically replace the output variables with expressions, introduce an \texttt{Out} fact in the case of a send operation, introduce an \texttt{In} fact in the case of a receive or user input function, introduce an \texttt{Fr} fact for a newly generated unique number, or introduce an equality restriction action fact \texttt{Eq} or \texttt{Neq}. 

We show a list of representative translations in Table~\ref{tab:dsl_small_translation}. A list of additional available translations can be found in Table~\ref{tab:dsl_translation} in the Appendix. The type of translation meant is indicated in brackets. [SET] type will replace occurrences of the left-hand-side variable to the right-hand-side expression. The [ACTION] type adds an action fact, typically used for equality restrictions. The [IN] and [OUT] type adds facts to the precondition and postcondition, respectively, of a Tamarin rule.
For example, the implementation of the function \textquote{EC25519-DH} computes the Elliptic-Curve group element from a secret x and a public key gy and stores it in variable gxy, while the translation function syntactically replaces all occurrences of gxy in the rule with the expression gy\texttt{\^}x which is Tamarin syntax for Diffie-Hellman group exponentiation.

\begin{table}
    \centering
    \caption{Example Translations. The complete translation table can be found in Appendix.}
    \begin{tabular}{l l}
        \toprule
        \textbf{Function call (AutoTam)} & \textbf{Translation (Tamarin)} \\
        \midrule
        gx = EC25519PublicKey(x) & [SET] gx $\rightarrow$ 'g'\texttt{\^}x \\
        gxy = EC25519DH(x, gy) & [SET] gxy $\rightarrow$ gy\texttt{\^}x \\
        a = Blake2s\_hash(b) & [SET] a $\rightarrow$ h(b) \\
        x = GenSecret(SIZE) & [IN] Fr($\sim$x) \\
		m = recv(fd, SIZE) & [IN] In(m) \\
		send(fd, m) & [OUT] Out(m) \\
        a = concat(b, c) & [SET] a $\rightarrow$ $\langle$b, c$\rangle$ \\
        t = timestamp() & [SET] t $\rightarrow$ \$t \\
        RequireEq(a, b) & [ACTION] Eq(a, b) \\
        RequireNEq(a, b) & [ACTION] Neq(a, b) \\
        x = Arg(SIZE) & [IN] In(x) \\
        x = Arg\_priv\_dh(SIZE) & [IN] !DHLtk($\sim$x) \\
        x = Arg\_pub\_dh(SIZE) & [IN] !DHPk(x) \\
        \bottomrule
    \end{tabular}
    \label{tab:dsl_small_translation}
\end{table}

To ensure that unique terms have new names in the Tamarin model, the statements to be translated are first transformed to single static assignment form~\cite{cytron1989ssa}. Since there is no control flow within each extended transition, the translation is trivial.

\subsection{Soundness of Translation}
\label{subsec:soundness_translation}

We present a proof of the soundness of translation with respect to trace properties. 
The desired soundness property can be stated in terms of trace inclusion. Informally, any global trace of some protocol participants executing the protocol must have a corresponding trace in the generated Tamarin model. Then, any proof in Tamarin stating that a given trace property holds for every trace in the model, proves that an equivalent trace property holds for any scenario during concrete execution.

The basic proof idea is to prove that a global execution trace of completed transitions matches a Tamarin trace where the corresponding rule for each transition is applied. We prove this by proving that the set of transition traces is preserved when performing the various transformations described in Section~\ref{subsec:translation_rules}.

We make our proof in a symbolic model, treating program variables as abstract terms rather than bitstrings. We assume a symbolic adversary model, both at the level of our execution semantics and the semantics of the translated model. A treatment on the computational soundness of a symbolic model is out of scope for this work. The survey by Cortier et al.~\cite{cortier2011surveysymbolic} provides an overview of works on bridging the gap between symbolic and computational protocol models. 

\begin{table}[]
    \centering
    \caption{Notation}
    \begin{tabular}{l c|l c}
        \toprule
        $i$ & Statement  & $P$ & protocol role description \\
        $S$ & State & $\Pi$ & Scenarios of Descriptions \\
        $T$ & Transition & $x$ & Execution State \\
        $p$ & Participant & $X$ & Global Execution State \\
        $C$ & Scenario & $w_e$ & Execution Trace \\ 
        $R$ & Rule  & $w_r$ & Rule Trace \\ 
        $F$ & Translation & $\tau_e$ & Traces of Scenario \\ 
        $M$ & Rule Mapping & $\Psi$ & Transition Sequence \\ 
         &  & $\pi$ & Traces of Rules \\ 
    \end{tabular}
    \label{tab:autotam_notation}
\end{table}

We introduce some notation as we formalise the execution semantics of AutoTam protocol role descriptions. We summarise the main notation used in Table~\ref{tab:autotam_notation}.
An AutoTam protocol role description consists of states and transitions. A state $S$ consists of two sequences of statements $S = ( [ i_{O1}, i_{O2}, ..., i_{On} ], [ i_{I1}, i_{I2}, ..., i_{Im} ] )$, where the first sequence is the output part and the second is the input part. A transition $T_{ab}$ between states $S_a$ and $S_b$ consists of a preceding state, a sequence of statements, and a succeeding state $T_{ab} = ( S_a, [ i_1, i_2, ..., i_n ], S_b )$. A protocol role description $P$ consists of a set of states and a set of transitions $P = ( \{ S_I, S_x, S_y, S_z, ... \}, \{ T_{Ix}, T_{xy}, ... \} )$. We assume all our protocol role descriptions are syntactically valid and follow Requirement~\ref{req:no_live_tran} and Requirement~\ref{req:no_stop_out}.

We introduce the notion of a participant, as an instance of an executing protocol role description, given globally unique values for all private keys, and knowledge of public keys of other participants. We define a scenario $C$ for a set of protocol role descriptions as a group of participants,
and denote the set of all scenarios for a given set of descriptions as $\Pi$. $C = \{p_{a}^1, p_{b}^1, p_{c}^2, ...\} \in \Pi(\{P_1, P_2, ...\})$, where a participant $p_n^m$ is an instance of protocol role description $P_m$. We place no bound on the number of participants for any protocol role description. 

An execution state $x_a$ of a participant $p_a$ is given by a statement index $pc$ within either a state or transition $B$, along with a set of terms $V$, each mapped to a variable in the protocol role description, local to the participant: $x_a = (pc, B, V)$. 

A global execution state for a scenario is given by the collective execution state of all participants, for example $X_1 = (x_a^1, x_b^1, x_c^1)$. We define a global execution trace $w_e$ as a sequence of global execution states starting from the initial state, where each next global execution state has executed one statement for one of the participants. We assume without loss of generality that concurrent executions appear in the global trace in some arbitrary order. We denote the set of admissible global traces for a scenario $C$ as $\tau_e(C)$. We write traces using arrow notation: $w_e = X_1 \rightarrow X_2 \rightarrow X_3 \in \tau_e(C)$.

We define the \emph{transition sequence} as the sequence of \emph{fully completed} transitions performed by a participant. We define a \emph{global transition sequence} as the transition sequence of the global execution trace. We define a mapping $\Psi$ from an execution trace to its transition sequence, consisting of a sequence of all transitions in the trace which were executed completely.

Before presenting our soundness theorem and its proof, we prove lemmas with respect to the various translations made by AutoTam.
We first define a \emph{transition-truncated global execution trace} as a global execution trace with no partially completed transitions.

\begin{lemma} \label{lemma:trunc_transitions}
For any scenario $C$ and any execution trace $w_e \in \tau_e(C)$, there exists a transition-truncated global execution trace $w_e* \in \tau_e(C)$ with the same global transition sequence $\Psi(w_e) = \Psi(w_e*)$.
\end{lemma}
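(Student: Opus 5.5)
The plan is to build $w_e*$ from $w_e$ by deleting, for every participant, the statements that belong to transitions which were started but never fully completed. We then show that what remains is still an admissible global execution trace of $C$, and that it has the same transition sequence. Since $\Psi$ only records fully completed transitions, deleting steps of incomplete transitions cannot change $\Psi$, provided the remaining steps stay in their original relative order and stay admissible. So the real content of the lemma is admissibility.

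First we classify the partial transitions in $w_e$. For each participant $p_a$ there are two kinds.
\begin{itemize}
\item A transition $T_{ab}$ may have been aborted by a failure. The interpreter then returns to state $S_a$ and tries the next alternative transition, exactly as in the example $[A_O,A_I,T_{AB},T_{AC},\dots]$.
\item The last transition of $p_a$ may have been left unfinished because the trace simply ends.
\end{itemize}
We delete every global step that executes a statement of either kind for $p_a$. The global states in between are adjusted by replacing $x_a$ with the local state $p_a$ had just before entering the deleted transition. Because each global step changes only one participant's state, the other participants' components are unaffected.

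Second, we check that the resulting sequence is in $\tau_e(C)$, and this is the crux. There are two points to verify.
\begin{itemize}
\item \textbf{Local effects.} The deleted steps must not influence later steps of $p_a$. Transitions contain no network input or output, since sends and receives occur only in the output and input parts of states. A failed transition therefore affects the future only through the variables it assigned. By Requirement~\ref{req:no_live_tran}, no such variable is live at the start of any sibling transition from the same state, so executing the next alternative from the pre-transition variable set $V$ gives the same behaviour as before. Later states are reached only through completed transitions, whose assignments are unchanged. One should additionally argue that variables written by the failed transition and read later are overwritten first. If liveness only covers the sibling transitions, we would strengthen the argument by noting that in single static assignment form each such variable is assigned again on the successful path before use.
\item \textbf{Global effects.} The deleted steps must not influence other participants. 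Since transitions perform no sends, deleting them removes nothing from the network or adversary knowledge. Every receive of another participant therefore remains enabled in the same position.
\end{itemize}

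Finally, a trailing unfinished transition is simply cut off. This yields a prefix-closed admissible trace, provided the other participants' later steps do not depend on it, which again holds because it emits no messages. The result $w_e*$ has no partially completed transitions, and the order of the completed transitions is preserved, so $\Psi(w_e)=\Psi(w_e*)$.

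The main obstacle I expect is the liveness argument in the local-effects step: formalising that the interpreter's rollback on failure, combined with Requirement~\ref{req:no_live_tran}, makes the deleted partial executions semantically invisible. I would approach it by induction on the position of the deleted steps, maintaining the invariant that the local state of each participant in $w_e*$ agrees with its state in $w_e$ on all live variables.
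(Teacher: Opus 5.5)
This is essentially the paper's proof in constructive form. The paper argues by contradiction that a completed transition which needed a partially executed one would require some statement of the partial transition to affect later execution or another participant, which Requirement~\ref{req:no_live_tran} and the absence of network-visible operations in transitions rule out. Those are exactly your local- and global-effects steps.

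Your SSA fallback is unnecessary, and SSA is only applied at translation time anyway. Liveness is path-based, so Requirement~\ref{req:no_live_tran} already forbids writing any variable read before redefinition on any path from a sibling transition, including through later states. Also, as in the paper, entering the sibling transition directly is admissible only under the nondeterministic choice of Lemma~\ref{lemma:nondeterministic}, not the fixed trying order, so that relaxation should come before truncating.
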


\begin{proof}
We depend on Requirement~\ref{req:no_live_tran}.
Assume the lemma is false. Then there must exist some scenario $C$ with a trace $w_e$, that has no global trace $w_e^* \in \tau_e(C)$ without partially completed transitions so that $\Psi(w_e) = \Psi(w_e^*)$. There must then exist some transition $T_{xy}$ performed in $w_e$ which was not possible without partial execution of some transition $T_{xz}$. This implies that an executed statement in $T_{xz}$ affected future execution of the program, or another participant. This contradicts Requirement~\ref{req:no_live_tran}, and our requirement that no network-visible execution may occur during a transition.
\end{proof}

We define a \emph{complete global execution trace} as a global execution trace with no partially completed transitions or states, where all participants complete the output part upon completing the preceding transition, and no input parts are completed without some following transition being completed.

\begin{lemma} \label{lemma:streamline_state}
For any scenario $C$ and any transition-truncated global execution traces $w_e^* \in \tau_e(C)$, there exists a complete global execution trace $w_e^\# \in \tau_e(C)$ with the same global transition sequence $\Psi(w_e^*) = \Psi(w_e^\#)$.
\end{lemma}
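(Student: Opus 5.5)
We argue by rearranging and pruning a given transition-truncated trace $w_e^*$ participant by participant, using the reorderability of independent steps in the global trace. A complete trace differs from $w_e^*$ in two ways. First, some output parts may be only partially executed, or may not immediately follow the completion of the preceding transition. Second, some input parts (or prefixes of them) may be executed without a following transition ever completing. We handle these by two transformations, each preserving membership in $\tau_e(C)$ and the global transition sequence $\Psi$. In both, the key fact is that a statement executed by participant $p_a$ only interacts with other participants through network operations: an \texttt{Out} when it sends, an \texttt{In} when it receives.

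\textbf{Step 1 (trimming input parts).} For each participant, consider the suffix of its local execution after its last completed transition (or after the initial output part). If this suffix contains statements of an input part, we delete them from the trace. Deleting them keeps every other participant's behaviour admissible, for two reasons:
\begin{itemize}
\item Receiving a message in the symbolic model does not consume it from the adversary's knowledge.
\item The locally assigned variables are never used afterwards, since no further transition of $p_a$ completes.
\end{itemize}
Similarly, if an input part was executed and then followed only by failed attempts at transitions, those attempts are already removed in $w_e^*$. So what remains is an input part with no completed successor, which we drop. $\Psi$ is unchanged because no completed transition is removed.

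\textbf{Step 2 (completing output parts).} For each completed transition $T_{ab}$ of participant $p_a$, we move the statements of the output part of $S_b$ to directly after the last statement of $T_{ab}$, inserting any not-yet-executed output statements. Two facts justify this.
\begin{itemize}
\item By Requirement~\ref{req:no_stop_out}, output statements cannot fail on non-network errors. Sends never block in the Dolev-Yao model, and \texttt{GenSecret} always succeeds with a fresh value. Hence executing the full output part is always possible from that state.
\item Moving sends earlier only enlarges the adversary's knowledge at every later point. Every receive of every other participant that was admissible before therefore remains admissible, so the moved trace stays in $\tau_e(C)$.
\end{itemize}
Fresh values remain unique, since they are still generated exactly once. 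The local order of $p_a$'s own statements is preserved because the output part logically precedes its input part anyway.

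\textbf{Main obstacle.} The delicate point is the monotonicity argument in Step 2: commuting a send of $p_a$ before unrelated steps of other participants must not invalidate their receives or equality checks. I expect to argue this by an exchange lemma on adjacent steps of distinct participants. Two adjacent steps commute unless the second is a receive of a message constructed using the first; moving a send leftwards never breaks such a dependency. Iterating these adjacent swaps, together with the deletions in Step 1, yields the complete trace $w_e^\#$ with $\Psi(w_e^*) = \Psi(w_e^\#)$.
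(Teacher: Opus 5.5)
Your proposal is correct and takes essentially the same approach as the paper's appendix proof. Dangling input parts are dropped because they reveal nothing to the network and do not appear in $\Psi$, and unfinished output parts are completed using Requirement~\ref{req:no_stop_out} together with monotonicity of adversary knowledge (moving each output part to sit directly after its transition is a strengthening the paper does not need, and one loose end shared with the paper, a partially executed output part of the initial state, is closed by the same Requirement~\ref{req:no_stop_out} argument).
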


\begin{proof}
We give the full proof of Lemma 2 in Section~\ref{sec:proof_lemma2} in the Appendix. In short, we prove that removing input parts or adding output parts to the execution cannot reduce the set of traces, leveraging Requirement~\ref{req:no_stop_out} to guarantee that execution of the output part is possible.
\end{proof}

\begin{lemma} \label{lemma:nondeterministic}
The set of global traces for a scenario when participants follow a deterministic transition order is a subset of the set of global traces for the same scenario when participants follow a non-deterministic transition order. 
\end{lemma}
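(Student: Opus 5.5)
The idea is to view the deterministic transition order as one particular way of resolving the choices that the non-deterministic semantics leaves open. Every deterministic trace is then obtained by fixing those choices in the non-deterministic semantics. I formalise the two semantics as step relations on global execution states. In the deterministic semantics, a participant whose execution reaches the end of the input part of a state $S_a$ tries the outgoing transitions $T_{ab}, T_{ac}, \dots$ in the order they are defined. A failure in one transition restores the execution state to the point before that transition and moves on to the next one. In the non-deterministic semantics, the participant may instead begin any outgoing transition of $S_a$, and a failure likewise returns execution to the end of $S_a$. From there any outgoing transition, including the same one, may be tried again. The claim is then $\tau_e^{det}(C) \subseteq \tau_e^{nd}(C)$ for every scenario $C$.

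The proof is by induction on the length of a trace $w_e = X_1 \rightarrow \dots \rightarrow X_n \in \tau_e^{det}(C)$. I show that each prefix is also a prefix of some trace in $\tau_e^{nd}(C)$. Both semantics share the same initial global state, which gives the base case. For the inductive step, consider the step $X_k \rightarrow X_{k+1}$, in which a single participant executes one statement. There are three cases.
\begin{enumerate}
\item The statement lies inside a state part, or strictly inside a transition. The statement semantics are identical in both settings, so the non-deterministic semantics can take the same step.
\item The step begins a transition $T_{ab}$ after the input part of $S_a$ completes, or after an earlier transition attempt failed. The deterministic rule chose $T_{ab}$ from among the outgoing transitions of $S_a$. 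Since the non-deterministic rule permits any outgoing transition, it may choose $T_{ab}$ in particular.
\item The step is a failure that rolls back a partially executed transition. Both semantics restore the same pre-transition execution state, so they agree.
\end{enumerate}
Since the steps of other participants are interleaved arbitrarily in both semantics, the induction proceeds one participant-step at a time without interference.

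The main obstacle is making sure the failure and rollback behaviour is modelled identically in the two semantics, so that "trying alternatives in order" really is just a restriction of the choices. In particular, I must check that after a failed $T_{ab}$ the deterministic semantics continues from exactly the execution state from which the non-deterministic semantics may start $T_{ac}$. Any variables assigned during the failed attempt must either be rolled back or be irrelevant to $T_{ac}$. Requirement~\ref{req:no_live_tran} settles this: a transition does not assign variables that are live at the start of a sibling transition, so leftover assignments cannot change how $T_{ac}$ executes. I also handle the edge case where every outgoing transition fails. The deterministic semantics then blocks or terminates, and the non-deterministic semantics can reproduce the same prefix, since trace sets are closed under prefixes. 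This case therefore adds nothing outside the non-deterministic set, which establishes the inclusion.
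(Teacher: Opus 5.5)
Your proposal is correct and takes essentially the same route as the paper. The paper's proof is the one-line observation that the deterministic choice of transition is one of the options available under non-deterministic choice, so every deterministic trace remains possible; your induction on trace length spells this out. One remark: if the failure step is defined identically in both semantics, as your case 3 does, then the appeal to Requirement~\ref{req:no_live_tran} is unnecessary here, because the resulting global states coincide exactly. The paper uses that requirement only in Lemma~\ref{lemma:trunc_transitions}.
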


\begin{proof}
The addition of non-deterministic choice can only add additional possibilities, since the previous traces remain possible.
\end{proof}

We define our translation function $F(U) = (R, M)$ for generating a Tamarin model using AutoTam as a function which takes a set of protocol role descriptions $U$ and outputs a set of rules $R = \{K_s, r_{ab}, r_{ab}, ...\}$ and a mapping function $M$. $R$ contains a rule for each transition which describes the cumulative effect of performing a complete extended transition, meaning a joined, output-transition-input sequence of statements, in addition to a key-share rule $K_s$ which provides unique private keys for each role. The trace mapping function $M$ maps each rule in a rule trace to its transition, or skips if the rule is $K_s$. For example, $M( K_s \rightarrow r_{ab}  \rightarrow K_s \rightarrow r_{bc} \rightarrow  ...) =  T_{ab}  \rightarrow  T_{bc} \rightarrow  ...$. We write $\pi(R)$ for the set of admissible rule traces of $R$.

We now state our soundness theorem for the translation.

\begin{figure}
    \centering
    \includegraphics[width=0.9\columnwidth]{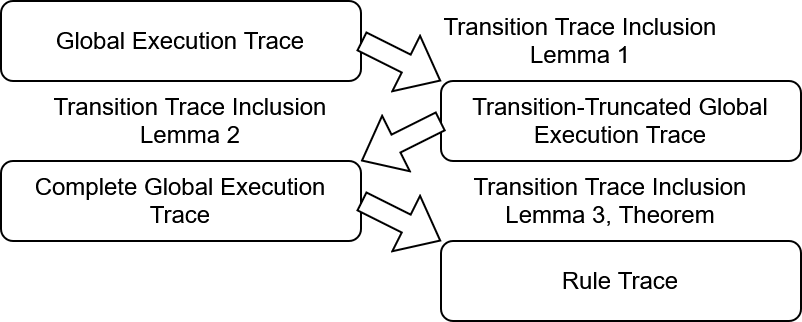}
    \caption{Transition Trace Inclusion Proof Steps.}
    \Description{A sequence of proof steps, with labelled arrows between. Arrow n is directed arrow from Box n to Box n+1. Box 1: Global Execution Trace; Arrow 1: Transition Trace Inclusion (Lemma 1); Box 2: Transition-Truncated Global Execution Trace; Arrow 2: Transition Trace Inclusion (Lemma 2); Box 3: Complete Global Execution Trace; Arrow 3: Transition Trace Inclusion (Lemma 3, Theorem); Box 4: Rule Trace.}
    \label{fig:autotam_proof}
\end{figure}

\begin{theorem}
For any set of protocol role descriptions $U$ translated as $F(U) = (R,M)$, then for any scenario $C \in \Pi(U)$ all global execution traces $w_e \in \tau_e(C)$ have a matching trace $w_r \in \pi(R)$ so that $\Psi(w_e) = M(w_r)$
\end{theorem}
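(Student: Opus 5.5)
The plan follows the chain in \figurename~\ref{fig:autotam_proof}. Fix $U$, $C \in \Pi(U)$ and $w_e \in \tau_e(C)$. Lemma~\ref{lemma:trunc_transitions} gives a transition-truncated trace $w_e^*$ with $\Psi(w_e^*) = \Psi(w_e)$. Lemma~\ref{lemma:streamline_state} then gives a complete trace $w_e^\#$ with $\Psi(w_e^\#) = \Psi(w_e)$. Since the Tamarin model makes no commitment to the order in which alternative transitions are tried, I pass to the non-deterministic-choice semantics. By Lemma~\ref{lemma:nondeterministic}, $w_e^\#$ remains an admissible trace there. It therefore suffices to build, for every complete trace under non-deterministic transition choice, a rule trace $w_r \in \pi(R)$ with $M(w_r) = \Psi(w_e^\#)$.

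In a complete trace, the statements of each participant group exactly into extended transitions: input part of $S_a$, then the body of $T_{ab}$, then the output part of $S_b$, with the initial output parts handled as a base case. Interleavings between participants can be reordered without changing $\Psi$. The reason is that different participants share no variables and interact only through the network, so swapping adjacent steps of distinct participants is harmless, provided a send always precedes the receive that consumes it. This lets me assume each extended transition executes contiguously. I then build $w_r$ by induction on the number of completed extended transitions.

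First, for each participant instance I emit $K_s$ to create its private keys, and I instantiate the fresh and public-key facts consistently with the values fixed for that participant in $C$. Then, for each contiguous extended transition $T_{ab}$ of participant $p$, I emit the rule $r_{ab}$ under the following substitution:
\begin{itemize}
\item the state fact arguments take $p$'s current variable values $V$;
\item \texttt{In} terms are the received terms;
\item \texttt{Fr} terms are the generated nonces;
\item SET variables become the symbolic terms the interpreter computed.
\end{itemize}

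The induction invariant has three parts:
\begin{enumerate}
\item After the prefix, each participant's state fact $S_b(\vec v)$ is in the Tamarin multiset, with $\vec v$ equal to the participant's variable store in SSA form.
\item Every term sent so far appears as an \texttt{Out}, hence lies in adversary knowledge, so every later \texttt{In} premise is derivable.
\item Freshness holds: each \texttt{GenSecret} yields a distinct name, matching a distinct \texttt{Fr}.
\end{enumerate}

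The step where I expect most of the work is showing that one executed extended transition is exactly one valid application of $r_{ab}$. This is a statement-by-statement simulation. For each row of the translation table, the concrete symbolic effect of the function equals the Tamarin term after substitution. For example, \texttt{concat} yields the pair, and \texttt{EC25519DH} yields \texttt{gy\^{}x} modulo the DH equational theory. Success of \texttt{RequireEq} and \texttt{RequireNEq} makes the \texttt{Eq} and \texttt{Neq} actions satisfy their restrictions. Parsing and decryption succeed only when the input term unifies with the pattern in the rule's \texttt{let}, so the \texttt{In} premise matches.

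The delicate points are failures in the output part, which could make the rule overapproximate an unrealisable step. These are excluded by Requirement~\ref{req:no_stop_out}. The other case is a failed transition that leaves side effects, which Requirement~\ref{req:no_live_tran} together with Lemma~\ref{lemma:trunc_transitions} already removes. I would argue these per function as a case analysis over the translation table, relying on the symbolic (Dolev--Yao) assumption that execution operates on terms modulo the same equational theory.

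Finally, $M$ erases the $K_s$ steps and maps each $r_{ab}$ to $T_{ab}$. By construction this gives $M(w_r) = \Psi(w_e^\#) = \Psi(w_e)$.
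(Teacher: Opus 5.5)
Your proposal is correct and follows the paper's route. Lemmas~\ref{lemma:trunc_transitions} and~\ref{lemma:streamline_state} give a complete trace, Lemma~\ref{lemma:nondeterministic} passes to nondeterministic transition choice, and each extended transition is then read off as one rule application; your contiguity reordering, induction invariant and table-driven simulation spell out what the paper compresses into the sentence that the rules ``perform the identical effect of a transition''. Two small corrections:
\begin{itemize}
\item Requirement~\ref{req:no_stop_out} guards against the model \emph{missing} a trace, namely a completed $T_{ab}$ whose rule $r_{ab}$ cannot fire because a check in the output part of $S_b$ fails. It does not guard against overapproximation, and this case is already discharged by Lemma~\ref{lemma:streamline_state}.
\item $\Psi$ is an ordered global sequence, so arbitrary swaps of adjacent steps of distinct participants can reorder it. Your reordering preserves $\Psi$ only if you place each extended transition at the point where its transition body completes, moving its receives later and its sends earlier.
\end{itemize}
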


\begin{proof}
We illustrate the proof in Figure~\ref{fig:autotam_proof}.  Lemmas~\ref{lemma:trunc_transitions} and~\ref{lemma:streamline_state} together prove that for any execution trace $w_e \in \tau_e(C)$ there exists a global complete execution trace $w_e^\# \in \tau_e(C)$ so that $\Psi(w_e) = \Psi(w_e^\#)$. Lemma~\ref{lemma:nondeterministic} proves that $w_e^\#$ is a valid trace of $C$ even if executions of descriptions of $C$ allow nondeterministic choice of transition. 
From our definition of $F$ the rules $R$ describe complete extended transitions which perform the identical effect of a transition in a global complete execution trace with nondeterministic choice. Thus, there is a valid trace $w_r \in \pi(R)$ with $\Psi(w_e) = M(w_r)$.

\end{proof}

Some details are omitted from our proof. We do not explicitly formalise how variables are mapped in rewrite rules, instead the proof relies on variables being mapped one-to-one, while in practice the translation optimises the rewrite rules to only consider locally relevant variables. 
We also implicitly assume that the translation table, describing the mapping of statements to a symbolic Tamarin model according to Table~\ref{tab:dsl_translation}, soundly captures the behaviour of the concrete function implementations. We omit most of the details about key-share of pre-distributed long-term keys, which is needed for any participant to meaningfully execute the protocols, but is technically performed outside AutoTam execution. We assume the key-share rule $K_s$ models the key-share method. Finally, we omit the description of the network adversary, as we assume the adversary model described in Section~\ref{subsec:sym_protocol_verif} across our models.

\subsection{Built-In Features}

In addition to the translation described, there are several built-in features which assist in creating Tamarin models which are readable and efficient. Let-bindings are automatically used in place of expressions which are repeated across a generated rewrite rule, greatly improving readability. Lists are automatically unfolded in order to ensure soundness, since expressions such as $h(<a,<b,c>>)$ and $h(<<a,b>,c>)$ are treated as non-unifiable terms in Tamarin, despite both representing a hash applied to the same sequence of terms $a,b,c$. This can be remedied by choosing to use either a left-associative or right-associative syntax throughout a model. We automatically translate all expressions of either form into fully unfolded form to improve readability, e.g. $<a,b,c>$, which in Tamarin is syntactic sugar for right-associative pairs.

User provided private keys are automatically typed as private terms in the generated Tamarin model. Some functions, such as computing a timestamp, are modelled as public terms in place of timestamps. Typing of terms is automatically and soundly propagated across rewrite rules, by ensuring that the term has the given type in all rules which produce the incoming state fact. 

AutoTam additionally allows automatic generation of Tamarin lemmas for standard security properties such as injective agreement from Lowe's hierarchy~\cite{Lowe1997AuthenticationHierarchy} and weak secrecy. The user only needs to specify the name of the transitions which are to be used as events in building the properties, and the names of agreement variables.

\section{Symbolic Execution of Implementations} \label{sec:symbolic_execution_method}

To gain additional confidence in the security of implementations, and to decreases the reliance on the security of our interpreter, we integrate dynamic symbolic execution into AutoTam.
We utilise the KLEE symbolic execution engine~\cite{cadar2008kleecoverage}, and follow the method of Wilson and Asplund~\cite{KleeNordsec} for performing efficient symbolic execution of protocol implementations using KLEE. The method consists of replacing the concrete implementations of cryptographic functions with symbolic function abstractions, treating network input as fully symbolic, and delaying concretization of message lengths as far as possible during analysis. In addition, the common \textquote{memcmp} function from the standard~C library is given a special function abstraction. Through this method, it is possible to prove that an implementation does not access memory outside of the largest symbolic length of buffers when executed through our interpreter, ruling out severe memory errors. Moreover, symbolic execution can be extended to support other analyses such as conformance tests, see e.g.~\cite{Asadian2024ares, chau2019SymCert2, Vanhoef2018symbolicexecutionprotocols}.

We have implemented function abstractions for all cryptographic functions available in our language. We use function abstractions which return symbolic data instead of performing cryptographic operations. This way we may still explore the behaviour assuming any output from cryptographic functions is possible. As discussed in Section~\ref{subsec:language_variables}, we rely on the formally verified HACL library for cryptographic operations.

Through this method we can systematically, and in the ideal case completely, explore the behaviour of our implementations when executed using our interpreter. Since at any point the program would read network input we produce a new symbolic message, we gain complete coverage with respect to arbitrary network input. 

\section{Case Studies} \label{sec:AutoTam_casestudies}

We perform two case studies.
One on a signed Diffie-Hellman key-exchange protocol, and one on the WireGuard VPN protocol~\cite{donenfeld2017ndsswireguard}. 
We compare our generated WireGuard model with existing Tamarin models from the literature. 
We additionally conduct a performance evaluation of our WireGuard implementation. We measure throughput when running our implementation using our interpreter and compare to two existing WireGuard implementations: the official Linux kernel implementation and the official user-level Go implementation. Our WireGuard implementation is interoperable with both implementations, although some features are omitted from our implementation such as DoS mitigation and timed renegotiation.

\subsection{A Diffie-Hellman Key-exchange Protocol}

We study a Diffie-Hellman key-exchange (DH) protocol designed to establish a new ephemeral key between two endpoints which have known long-term signing keys. We follow a version of DH similar to the one used in a case study by Arquint et al.~\cite{arquint2023tamarintoimpl}. The established keys are authenticated through a signature using the long-term keys. The protocol is shown in Alice and Bob notation in Table~\ref{tab:dhprotocol}, where roles are shown to the left, with the arrow denoting the direction of the message sent going from sender to receiver. To the right of the colon the message fields are shown, with angle brackets denoting list structures, quoted characters referring to constants, and exponentiation representing Diffie-Hellman public key operations. We use curly brackets for signatures or encryption.

The protocol has two roles, labelled A for the initiator role and B for the responder role. The protocol starts with an agent in the initiator role sending a plaintext message containing a fix header '1', its own unique 64-bit identifier I, the 64-bit identifier of another agent R, and a new ephemeral Diffie-Hellman public key $g^x$. A responder which receives a valid message will compute its own public key $g^y$ and sign a response message with the header '2', the identifiers, and the public keys using its own long-term key ltkB. The header '10', the message, and the computed signature is sent as a response. Finally, the Initiator will compute its own signed response message but with the header '3' in similar fashion. Upon completion both agents may compute the same value of $g^{xy}$, completing the key-exchange.

\subsection{The WireGuard VPN Protocol}

WireGuard is a protocol designed to provide encryption for IP packets suitable for secure VPN services, similar to IPsec or OpenVPN. The WireGuard protocol operates on top of UDP, where the IP messages to send are encrypted and appended to a short message header. To establish the keys to use for encryption, a handshake is first performed. The handshake is based on the IKpsk2 protocol from the Noise framework~\cite{perrin2018noise}. The protocol is initiated by the sender of the first message, and proceeds with a response from the responder, after which the encrypted transport messages may be sent, first by the initiator and then from either endpoint. The protocol is shown in simplified form in Table~\ref{tab:wgprotocol}.

We use curly brackets to denote AEAD encryption using derived keys.
The encryption and decryption keys are derived in a long chain of hashes, Diffie-Hellman operations, and hash-based key-derivation functions (HKDF) which we omit from the notation. To highlight the complexity of the key derivation, we note that deriving the encryption keys and first mac in the first message from the initiator requires thirteen separate hashing operations when the HKDF is expanded. 
The final message is the transport message which encrypts an IP packet \texttt{msgN}, and can be repeated, increasing the counter \texttt{csend}.

\begin{table}
    \centering
    \caption{Diffie-Hellman Key-exchange Protocol with revealing signature.}
    \label{tab:dhprotocol}
    \begin{tabular}{l}
        \toprule
        A $\rightarrow$ B: $\langle$'1', I, R, $g^x$$\rangle$ \\
        B $\rightarrow$ A: $\langle$'10', sign\{$\langle$'2', R, I, $g^x$, $g^y$$\rangle$\}$_{ltkB}\rangle$ \\
        A $\rightarrow$ B: $\langle$'10', sign\{$\langle$'3', I, R, $g^x$, $g^y$$\rangle$\}$_{ltkA}\rangle$
    \end{tabular}    
\end{table}

\begin{table}
    \centering
    \caption{WireGuard Protocol}
    \label{tab:wgprotocol}
    \begin{tabular}{l}
        \toprule
        A $\rightarrow$ B: $\langle$'1', '000', SidA, $g^x$, \{$g^{ltkA}$\}, \{$ts$\}, mac1, mac2$\rangle$ \\
        B $\rightarrow$ A: $\langle$'2', '000', SidB, SidA, $g^y$, \{\_\}, mac1, mac2$\rangle$ \\
        A $\leftrightarrow$ B: $\langle$'4', '000', receiverSid, csend, \{msgN\}$\rangle$
    \end{tabular}
\end{table}

\subsection{Implementation and Verification Effort}

The sizes of our AutoTam language protocol role descriptions are shown in Table~\ref{tab:impl_effort}. The protocol role descriptions include parsing and serializing behaviour for both roles. Incoming network messages are parsed and serialized using the \texttt{concat} and \texttt{split} functions for concatenating and dividing parts of messages, respectively. Most of the parsing behaviour can be implemented by dividing a message into appropriately sized parts and comparing each part with a constant or known value. We do not implement DoS mitigation, the moving window for avoiding duplicate counter values, and concurrent reading and writing of transport messages in our WireGuard implementation. 

For both protocols, we generate a Tamarin model. The generated model includes lemmas generated for both roles in each protocol. We produce lemmas for the handshake phase of both protocols, including weak secrecy of derived keys, and authentication as injective agreement on derived keys. We generate a reachability lemma showing that a handshake can complete for both protocols as a sanity check. This ensures that the other lemmas do not vacuously hold. The final row of Table~\ref{tab:impl_effort} shows the size of the generated Tamarin model.
The time to generate Tamarin models including lemmas for both of our test examples is less than 1s. We verify our models using Tamarin on a 16Gb RAM Intel Xeon 3.5Ghz CPU server.
The model for the DH protocol verifies without any modification for all security properties in less than 5s using Tamarin. 

\begin{table}[]
    \centering
    \caption{Implementation Size of DH and WireGuard Protocols and Size of Generated Tamarin Model.}
    \begin{tabular}{l c c}
        \toprule
        & DH & WireGuard \\
        \midrule
        \textbf{LoC Initiator} & 68 & 222 \\
        \textbf{LoC Responder} & 74 & 243 \\
        \midrule
        \textbf{Generated Lines} & 107 & 242 \\
        \bottomrule
    \end{tabular}
    \label{tab:impl_effort}
\end{table}

\begin{table}
    \centering
    \caption{Verification Results for Tamarin Models}
    \begin{tabular}{l c c}
        \toprule
         & DH & WireGuard \\
        \midrule
        \textbf{Time} & 3.06s & 359.62s \\
        \textbf{Reachability Initiator} & $\checked$ & $\checked$ \\ 
        \textbf{Weak Secrecy Initiator} & $\checked$ & $\checked$ \\
        \textbf{Injective Agreement Initiator} & $\checked$ & $\checked$ \\
        \textbf{Reachability Responder} & $\checked$ & $\checked$ \\ 
        \textbf{Weak Secrecy Responder} & $\checked$ & $\checked$ \\
        \textbf{Non-Injective Agreement Resp.} & $\checked$ & $\checked$ \\
        \textbf{Injective Agreement Responder} & $\checked$ & N/A* \\
        \bottomrule
        \multicolumn{3}{l}{*Not applicable for the handshake phase.}
    \end{tabular}
    \label{tab:Tamarin_Lemmares}
\end{table}

Verification of the WireGuard protocol is more challenging. In order to verify the generated model, we were required to manually provide a so-called sources lemma (see e.g.~\cite{cortier2022automaticsource}). WireGuard being challenging to verify is in line with other works on the verification of WireGuard using Tamarin, as the more detailed verified models of the protocol have required some form of manual guidance to verify efficiently, either in the form of writing sources lemmas~\cite{arquint2023tamarintoimpl}, writing an oracle~\cite{arquint2023tamarintoimpl, Girol2020noisetamarin, lafourcade2024unifiedwireguard}, or providing state invariants for the key-exchange~\cite{donenfeld2017formalwireguard}. 
Writing the sources lemma for the generated model was not any more challenging than writing a source lemma for any other model. We used the interactive mode of the Tamarin prover to identify the missing partial deconstructions and their cause. After some experimentation we found it sufficient to prove that adversary knowledge of the terms \texttt{sock}, \texttt{src\_addr} and \texttt{r\_sender\_index} was always preceded by the fourth action fact of the initiator role.

We show the verification results for the automatically generated DH model and the WireGuard model with the added sources lemma in Table~\ref{tab:Tamarin_Lemmares}.

\subsection{Comparing Generated and Expert Models} \label{subsec:wireguard_tamarin_compare}

\begin{table*}[h]
    \centering
    \caption{Level of detail in resulting Tamarin Models compared to other verified WireGuard models. We use slashes to denote partially modelled features. Lines of code (LoC) does not count lemmas.}
    \begin{tabular}{c c c c c c c c}
        \toprule
        \textbf{Model} & \textbf{Protocol} & \textbf{Executable} & \textbf{MAC} & \textbf{Decryption} & \textbf{Transport} & \textbf{HKDF} & \textbf{LoC} \\
        \midrule
        Donenfeld and Milner \cite{donenfeld2017formalwireguard} & WireGuard & X & X & no nonce & $\diagup$* & X & 225 \\
        Girol et al. \cite{Girol2020noisetamarin} & Noise & X & X & $\diagup^\dagger$ & $\checked$ & X & 311 \\ 
        Arquint et al. \cite{arquint2023tamarintoimpl} & WireGuard & $\diagup$** & X & X & $\checked$ & X & 293 \\
        Lafourcade et al. \cite{lafourcade2024unifiedwireguard} & WireGuard & X & $\checked$ & $\diagup^\dagger$ & $\checked$ & X & 371 \\
        AutoTam & WireGuard & $\checked$ & $\diagup^\ddagger$ & $\checked$ & $\diagup^\#$ & $\checked$ & 180 \\
        \bottomrule
		\multicolumn{3}{l}{\footnotesize{*Only the first transport message is modelled.}} 
        & &
        \multicolumn{3}{l}{\footnotesize{$^\dagger$Combined ciphertext and MAC.}} \\
        \multicolumn{3}{l}{\footnotesize{**Manually written protocol implementation.}}
        & &
        \multicolumn{3}{l}{\footnotesize{$^\#$Modelled but no property verified for transport.}} \\
        \multicolumn{6}{l}{\footnotesize{$^\ddagger$Only MAC1 is modelled, as MAC2 is only used for DDoS mitigation which is not implemented.}} \\
        
    \end{tabular}
    \label{tab:Tamarin_Wireguard}
\end{table*}

We compare our generated WireGuard model to other works which also use Tamarin for the verification of the WireGuard protocol. 
While other works have performed symbolic analysis for more advanced security properties such as verifying the secrecy of transport message payloads and anonymity, the security properties are not our main focus, as only basic authentication and secrecy properties are currently supported for automatic generation by our tool. The comparison focuses on the fidelity 
of models, meaning the amount of detail kept from the protocol role description.

We show our comparison in Table~\ref{tab:Tamarin_Wireguard}.
The columns indicate that a feature is fully modelled as authentically as possible using a checkmark, or that a feature is missing from the model using X. 
The Protocol column indicates whether the model follows the WireGuard message format or the Noise IKpsk2 message format. The MAC column indicates whether both MAC fields of the message are authentically modelled as derived from message hashes. Here the second MAC is used only for DoS mitigation, which we have not implemented and which is therefore not modelled in our model. The Decryption column indicates whether AEAD was authentically modelled with both a decryption and validation of ciphertext. 

No Tamarin model explicitly describes the steps performed by the key derivation functions as individual hash operations performed by WireGuard except for our model. While not strictly needed for soundness of verification, the expanded HKDF operations showcase a level of detail in our generated model which is cumbersome to model by hand.
The Tamarin model by Donenfeld and Milner~\cite{donenfeld2017formalwireguard} uses static strings in place of both MAC fields, and includes the first but not subsequent transport messages. Girol et al.~\cite{Girol2020noisetamarin} verify the Noise protocols, which includes the IKpsk2 protocol. 
IKpsk2 follows a slightly different format which does not contain some fields such as the MACs. The model by Arquint et al.~\cite{arquint2023tamarintoimpl} was made with the intention of generating requirements in separation logic. The reparation logic served as the specification when verifying the WireGuard implementation. This model did not authentically model the decryption of message terms, instead using encryption both when sending a message and when receiving a message. While the verification is still sound, it does not accurately represent the way an implementation would perform decryption.
The most detailed model of WireGuard that we are aware of is by Lafourcade et al.~\cite{lafourcade2024unifiedwireguard} who wrote their models using SAPIC+~\cite{cheval2022sapic}, a tool which can generate models for several different protocol verifiers including Tamarin. They analyse several scenarios for key compromise, including leakage of precomputed static keys and ephemeral keys, and their models include the complete transport phase. They additionally verify a number of different compromise scenarios, such as static ephemeral key leakage. They used a dedicated 256 core server for the verification of their models, where verification of all lemmas took at most 4h and 45min when using Tamarin. 

\subsection{Symbolic Execution Analysis}

We verify the safety of the executable protocol implementations using the method of symbolic execution described in Section~\ref{sec:symbolic_execution_method}. We show the results of applying the method to both roles of our DH and WireGuard implementations in Table~\ref{tab:sym_execution_res}. For the WireGuard protocol, the number of transport messages is limited to a single message.

\begin{table}
    \caption{Results of Symbolic Execution Analysis, showing analysis time, number of paths explored by KLEE, and the number of exceptions encountered.}
    \label{tab:sym_execution_res}
    \begin{tabular}{c c c c}
        \toprule
        \textbf{Implementation} & \textbf{Time} & \textbf{\# Paths} & \textbf{Errors} \\
        \midrule
        DH initiator & 59s & 965 & 0\\
        DH responder & 1min 22s & 1262 & 0\\
        WireGuard initiator & 21s & 44 & 0 \\
        WireGuard responder & 16min 33s & 4120 & 0\\
        \bottomrule
    \end{tabular}
\end{table}

Our analysis rules out most types of memory errors automatically. The analysis time and number of paths is highly dependent on the parsing behaviour of each role. The initiator side of the WireGuard protocol is the fastest to analyse, as it only parses a single relatively short message.
Analysis of the responder size of the WireGuard protocol takes considerably longer and generates a significantly larger number of execution paths. This is because this side of our implementation manages variably sized messages, in the form of transport messages with arbitrary encrypted data. Since our implementation limits the packet size to at most 512 bytes, the search space is still finite, and analysis can still exhaustively search through every possible message length.

\subsection{Protocol Performance}

We set up an experiment to evaluate the performance of our WireGuard implementation when using our AutoTam interpreter. We evaluate the throughput when connecting our client and server implementations over a 1000Mb/s ethernet connection, each implementation running on a 16Gb RAM Intel Xeon 3.5Ghz CPU server. 
We perform the experiment by measuring the time it took to receive 200 million fix-size UDP messages with a size of 448 bytes in UDP data from the initiator to the responder, starting from the time that the first message was received. Throughput was then computed as number of bits received over time taken.  

We show the setup of our experiments in \figurename~\ref{fig:autotam_eval}.
When measuring the throughput for the Linux kernel and Go implementations, we used a UDP client and server implementation where the client implementation sent UDP messages to the WireGuard routed address as fast as possible. 
Since our implementation is not a full TUN device implementation, we set up the implementation to itself generate the UDP messages to transport using WireGuard, while on the receiver end no further routing was performed and we instead timed receiving the WireGuard transport message payload. Because of this, the comparison is not fully fair, as there is some additional routing necessary to read and write messages between a UDP client and WireGuard implementation.

\begin{figure}
    \centering
    \includegraphics[width=\linewidth]{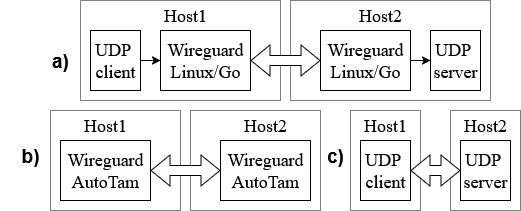}
    \caption{Evaluation Setups. a) Measuring throughput of Linux and Go implementations. b) Measuring throughput of AutoTam implementation. c) Measuring throughput with no WireGuard routing.}
    \Description{Three evaluation setups, labelled A, B and C, are shown as a series of connected boxes. Setup A contains two hosts, host 1 containing a UDP client sending to a WireGuard Linux/Go box, which is connected both ways to an identical box on host 2. The WireGuard Linux/Go box on host 2 sends to a UDP server. Setup B contains two hosts, each with a box labelled WireGuard AutoTam connected to each other. Setup C contains two hosts, host 1 labelled UDP client, host 2 labelled UDP server, both connected to each other.}
    \label{fig:autotam_eval}
\end{figure}

The results of the performance evaluation can be found in \figurename~\ref{fig:throughput}. Our implementation performs at roughly the same throughput as the official WireGuard Go implementation when ran by our interpreter. 
Considering that we have not made any specific optimizations as part of our interpreter implementation, it seems likely that further optimizations are possible.

\subsection{Interoperability}

We tested running our implementation alongside both the WireGuard Go and Linux kernel implementation in order to ensure our implementation is interoperable. We ensure our implementation can perform both sides of the WireGuard handshake, using two experiments.
For the first scenario, we ensure that the UDP message send by the UDP client results in a completed handshake between the given WireGuard implementation and our AutoTam implementation, and that the UDP message gets delivered. For the second scenario, we ensure that the handshake initiated by the AutoTam implementation is accepted by the given WireGuard implementation.

\section{Related Work} \label{sec:AutoTam_relatedwork}

\begin{figure}
    \centering
    \includegraphics[width=\linewidth]{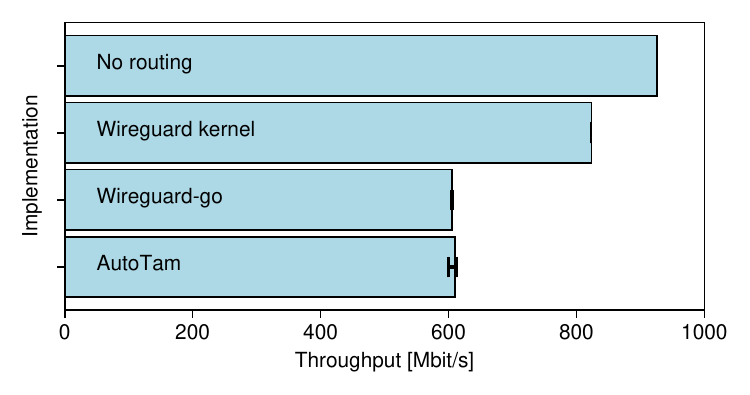}
    \caption{Measured Throughput of WireGuard Implementations, showing median of three experiments.}
    \Description{Barplot with four bars: 'No routing' at around 930Mbit/s, 'Wireguard kernel' at 820Mbit/s, 'Wireguard-go' at 600Mbit/s, and 'AutoTam' at 600Mbit/s.}
    \label{fig:throughput}
\end{figure}

\begin{table*}
    \centering
    \caption{Tools and Methods for Formal Verification of Protocol Implementations.}
    \begin{tabular}{l c c c c c c c}
        \toprule
        \textbf{Work} & \textbf{Type} & \textbf{Source} & \textbf{Trace Property} & \textbf{Impl.} & \textbf{Verif.} & \textbf{Loops} & \textbf{Largest} \\
        & & \textbf{$\rightarrow$Target} & \textbf{Proof Method} & \textbf{Expertise} & \textbf{Expertise} & & \textbf{Case-Study} \\
        \toprule
        Expi2Java~\cite{Backes2012expi2java} & 1 & Expi$\rightarrow$Java & ProVerif & \RIGHTcircle & \RIGHTcircle & Y & TLS 1.0 \\
        Cadé et al.~\cite{cade2012computationally, cade2015proved} & 1 & CryptoVerif$\rightarrow$OCaml & CryptoVerif & \RIGHTcircle & \RIGHTcircle & N & SSH \\
        Arquint et al.~\cite{arquint2023tamarintoimpl} & 1 & Tamarin$\rightarrow$Java/Go & Tamarin & \RIGHTcircle & \RIGHTcircle & Y & WireGuard \\
        Owl/OwlC~\cite{Gancher2023owl, singh2025owlc} & 1 & Owl$\rightarrow$Rust & Owl & \RIGHTcircle & \RIGHTcircle & Y & WireGuard \\
        \midrule
        CSur~\cite{GoubaultLarrecq2005csur} & 2 & C$\rightarrow$SMT & Horn clauses & \RIGHTcircle & \RIGHTcircle & Y & Needham-Schroeder \\
        ASPIER~\cite{chaki2009aspier} & 2 & C$\rightarrow$COPPER & model checker & \Circle & \RIGHTcircle  & Y & OpenSSL core \\
        c2pv~\cite{aizatulin2011cryptosymbolicexec} & 2 & C$\rightarrow$ProVerif & ProVerif & \Circle & \RIGHTcircle & N & minexplib \\
        CryptoBap~\cite{Nasrabadi2023CryptoBap, nasrabadi2025automated} & 2 & Binary$\rightarrow$SAPIC+ & SAPIC+~\cite{cheval2022sapic} & \Circle & \RIGHTcircle & Y & Whatsapp core \\
        \midrule
        VCC~\cite{dupressoir2014guiding} & 3 & C & Coq & \Circle & \CIRCLE & Y & Otway-Rees \\
        Igloo~\cite{sprenger2020igloo, pereira2025router} & 3 & Any & Isabelle / HOL & \Circle & \CIRCLE  & Y & SCION Router \\
        Arquint et al.~\cite{Arquint2023modular} & 3 & Any & Separation Logic & \Circle & \CIRCLE & Y & WireGuard \\
        \midrule
        F7~\cite{bengtson2011refinement, bhargavan2013miTLS} & 4 & F\# & Typecheck & \RIGHTcircle & \CIRCLE  & Y & TLS 1.2 \\
        DY*~\cite{bhargavan2021dystar, bhargavan2024dystar} & 4 & F* & Typecheck & \RIGHTcircle & \CIRCLE & Y & Signal / BA \\
        \midrule
        fs2pv~\cite{bhargavan2008fsquaretoproverif, bhargavan2006verifiedws} & 5 & F\#$\rightarrow$ProVerif & ProVerif & \Circle & \RIGHTcircle & Y & WS / Otway-Rees \\
        DJS~\cite{bhargavan2012defensivejs} & 5 & JS$\rightarrow$ProVerif & ProVerif & \Circle & \RIGHTcircle & Y & HMAC (50 LoC) \\
        AnBx~\cite{Modesti2014anbx2java} & 5 & AnBx$\rightarrow$Java & OFMC~\cite{basin2005ofmc} & \Circle & \RIGHTcircle  & N & 3KP \\
        SPS~\cite{Almousa2015SPS} & 5 & SPS$\rightarrow$JS,ProVerif, & ProVerif, & \Circle & \RIGHTcircle & N & TLS 1.2 \\
        & & ASLAN~\cite{armando2012avantssar} & ASLAN & & & & \\
        ProScript~\cite{Kobeissi2017ProScript, bhargavan2017TLSproverif} & 5 & JS$\rightarrow$ProVerif,CryptoVerif & ProVerif & \Circle & \RIGHTcircle & N & Signal / TLS 1.3 core \\
        hax~\cite{bhargavan2024hax, bhargavan2025protocolrust} & 5 & Rust$\rightarrow$F*,ProVerif, & ProVerif & \RIGHTcircle & \RIGHTcircle & Y & TLS 1.3 \\
        & & CryptoVerif & & & & & \\
        \midrule
        \textbf{AutoTam} & 5 & AutoTam spec.$\rightarrow$Tamarin & Tamarin & \Circle & \RIGHTcircle & Y & WireGuard \\
        \bottomrule
        \multicolumn{8}{l}{\textbf{Types:} 1 - Model To Implementation; 2 - Static Analysis Extraction; 3 - Direct Verification; 4 - Verification through Type Theory;} \\
        \multicolumn{8}{l}{5 - Implementation or Specification to Model. \textbf{Implementation Expertise:} \Circle Target/Source Language Only; \RIGHTcircle Some Formal Expertise} \\
        \multicolumn{8}{l}{\textbf{Verification Expertise:} \RIGHTcircle Model Language Expertise; \CIRCLE Manual Proof} \\
    \end{tabular}
    \label{tab:model_gen}
\end{table*} 

Prior to our work, we are not aware of any tool which can extract Tamarin models from executable implementations besides CryptoBap~\cite{Nasrabadi2023CryptoBap}. CryptoBap uses the HolBA binary analysis platform to symbolically execute protocol binaries for analysis, originally to ProVerif and CryptoVerif, with recent adaptions~\cite{nasrabadi2025automated} also targeting SAPIC+ which can translate to both Tamarin and DEEPSEC. Their goal is different from ours, as they are interested in performing thorough and deep analysis of existing implementations focusing on the protocol logic, while we are interested in verifying a complete reference implementation. Case studies reflect this, as they target the most relevant segments of binaries, extracting functions for handshake messages, while in contrast we analyse the entirety of protocol implementations written in the AutoTam language.

We categorise related works in Table~\ref{tab:model_gen}. We divide related work into five types, depending on the method used to connect between trace property verification and implementations. We list the input language and target language(s) of each tool in col.~3, and the verifier used for trace properties in col.~4.
We compare the needed formal expertise required for creating both the implementation itself in col.~5, and to perform the verification in col.~6. We believe no approach can be fully automatic in performing verification, since some verification domain knowledge will be needed to at least specify the properties to verify, identify protocol code sections for analysis, or to guide heuristics for more challenging protocols. 
Depending on if the approach can account for looping protocols, we write Y for yes and N for no in col.~7. Finally, we list the name of the largest case-study performed using the tool to our knowledge in col.~8.

Several works~\cite{Backes2012expi2java, cade2012computationally, cade2015proved, arquint2023tamarintoimpl, Gancher2023owl, singh2025owlc} (Type 1 in Table~\ref{tab:model_gen}) generate an implementation or contract from a verifiable protocol model, or a language close to the model. 
For this approach, the implementations must be made by formal verification experts knowledgeable in the modelling language.

A number of works including CryptoBap~\cite{GoubaultLarrecq2005csur, chaki2009aspier, aizatulin2011cryptosymbolicexec, Nasrabadi2023CryptoBap, nasrabadi2025automated} (Type 2 in Table~\ref{tab:model_gen}) develop methods for automatic verification of existing protocol implementations. This is important work in order to secure existing protocol implementations. However, the methods provide little guidance for how a new implementation for a new protocol should be written, and general static analysis methods rarely scale to entire protocol implementations. In the case of CSur~\cite{GoubaultLarrecq2005csur}, manual annotations are also required.

Some works~\cite{sprenger2020igloo, Arquint2023modular, pereira2025router} (Type 3 in Table~\ref{tab:model_gen}) design protocol implementations using sound refinement techniques from a verified formal model proven using a theorem prover. Others, e.g.~\cite{dupressoir2014guiding}, verify implementations directly using a theorem prover. For both approaches, the verification effort is significant. 

Other works~\cite{bengtson2011refinement, bhargavan2021dystar, bhargavan2024dystar} (Type 4 in Table~\ref{tab:model_gen}) encode the verification condition directly as type-check conditions in a type-theoretic language. Development in such languages requires good understanding of type theory and functional programming.
While specific to Noise protocols, Ho et al.~\cite{Ho2022noisestar} perform both symbolic verification and produce efficient implementations of the Noise protocols. Building proofs in F* was largely a manual effort, but once written, verified implementations for all noise protocols could be generated.

Finally, several works~\cite{bhargavan2008fsquaretoproverif, bhargavan2012defensivejs, Almousa2015SPS, Kobeissi2017ProScript, bhargavan2024hax, bhargavan2025protocolrust} (Type 5 in Table~\ref{tab:model_gen}) automatically translate implementations in high-level languages to a verifiable model. 
As discussed in the introduction, many of the existing works in this category do not cover the entire state machine of protocols, and in particular do not fully support looping protocol state machines. In the case of hax~\cite{bhargavan2024hax}, manual annotations are required to select functions to extract to ProVerif, and manual effort is needed to write the verification scenario.

Based on our classification, only five tools require no formal verification expertise during the implementation phase, does not require manual proofs, and supports looping protocol state machines: ASPIER, CryptoBap, fs2pv, DJS, and AutoTam. Out of these, all besides CryptoBap and AutoTam only verify small case studies to our knowledge.

In addition to the works already discussed in Subsection~\ref{subsec:wireguard_tamarin_compare}, there exists a number of works which analyse WireGuard which do not use Tamarin. A Computational proof was performed by Lipp et al.~\cite{lipp2019wireguardcryptoverif} using CryptoVerif. A manual proof of security in a computational model was performed by Dowling and Paterson~\cite{Dowling2018wireguardmanual}. Also, the closely related Noise protocol IKpsk2 has been analysed by Kobeissi et al.~\cite{Kobeissi2019noiseexplorer} and Ho et al.~\cite{Ho2022noisestar}. For a more detailed comparison of most of these works we refer to Lafourcade et al.~\cite{lafourcade2024unifiedwireguard}.

\section{Conclusions} \label{sec:AutoTam_conclusions_futurework}

In this paper, we have presented AutoTam, a domain-specific language and interpreter for cryptographic protocols, our AutoTam translator tool for soundly producing Tamarin models, and our integration of symbolic execution for protocol analysis of protocols written in AutoTam. 
The language is directly defined at the abstraction level of message terms, and as a result, implementation effort is focused on placing network operations, message parsing, and cryptographic operations into a state machine. Our sound translation enables verification of authenticity and secrecy properties at the implementation level.

We produce formally verified implementations of a signed Diffie-Hellman protocol and the WireGuard VPN protocol in our case-studies. We generate detailed symbolic models which accurately capture the semantics of the executable implementations, along with the verification goals in the form of high-level security properties. Since we use a sound transformation which retains trace properties between a set of AutoTam protocol role descriptions and their generated model, our proofs of the high-level properties formally verify the security of our implementations. Through symbolic execution we automatically rule out implementation-level memory errors. In addition, interoperability testing ensures that we implement and model the desired protocol.

\section*{Acknowledgments}

This work was partially supported by the Wallenberg AI, Autonomous Systems and Software Program (WASP) funded by the Knut and Alice Wallenberg Foundation.


{\footnotesize \bibliographystyle{acm}
\bibliography{ref}}

\appendix

\section{Open Science}

We provide the entire codebase of the implementation of AutoTam, including the AutoTam protocol role descriptions of our case studies. Our public repository can be found at~\url{https://gitlab.liu.se/ida-rtslab/public-code/autotam}. The project README explains how to build AutoTam on a Linux machine. The README additionally explains how to produce each artifact presented as part of our case studies. We have confirmed our project builds on Ubuntu 22.04.4 LTS. 

In the repository, we provide the artifact for the generated Tamarin model of the DH protocol under:
\begin{verbatim}
    examples/DH/automatic.spthy    
\end{verbatim}
The generated Tamarin model of the WireGuard protocol as originally generated can be found under:
\begin{verbatim}
    examples/Wireguard/automatic.spthy
\end{verbatim}
And the model with the added source lemma can be found as:
\begin{verbatim}
    examples/Wireguard/wg/auto_source.spthy
\end{verbatim}
The automatically generated files can be generated by building and running AutoTam as described in the README. 

In order to verify the models, a working installation of Tamarin is needed. Instructions for installing Tamarin are found at~\url{https://tamarin-prover.com/install.html}. We used Tamarin version 1.10.0 with Maude version 3.1. Once installed, automatic verification is performed by running:
\begin{verbatim}
    tamarin-prover [spthy file] --prove
\end{verbatim}
All lemmas should show as verified for both models.

The timing of the Tamarin verification, and the symbolic execution, were performed on an 16GB RAM Intel i7-1355U (12 core) computer.

To perform the benchmarking experiments, we used two 16Gb RAM Intel Xeon 3.5Ghz CPU (4 cores) servers running Ubuntu 22.04.5 LTS, which were connected directly with a 1000Mb/s ethernet connection.

For instructions on how to perform the experiments in our case studies, we refer to the detailed instructions in the project README.md file.

\section{Ethical Considerations}

This work presents methods for creating highly secure protocol implementations, which is for the benefit of the entire security community. We do not see dangers of potential misuse of AutoTam itself. A malicious actor would not benefit much from access to AutoTam or our description of it in this paper.

The code provided to replicate our performance benchmarks includes a simple UDP client to send messages, which could in theory be used to attempt to overload a network. We do not believe that our very simple UDP client is providing anything which would not already be easily implementable with a little bit of programming knowledge. We believe that providing the exact code used in order for our results to be fully replicable is more beneficial than hiding the implementation of our UDP client.

\section{Code Examples} \label{app:code_autotam}

We show our full AutoTam code for the DH example in Listing~\ref{lst:dh_initiator_autotam} and Listing~\ref{lst:dh_responder_autotam}. Finally, we show the generated Tamarin output for the DH example in Listing~\ref{lst:dh_tamarin_output}.

\section{Full Translation Table} \label{sec:appendix1_dsl}

Table~\ref{tab:dsl_translation} shows the full translation table.
The left-hand side shows available functions in the AutoTam language. The right-hand side shows the expression which is added to the rule when building the Tamarin rules for a transition. The type describes how each expression is added to the Tamarin rule. ACTION is used for equality and inequality restrictions which in Tamarin are encoded using the action facts Eq and Neq. SET describes an assignment, and will use the expression to the right of the arrow in place of the expression on the left. IN and OUT types are facts which are added to the precondition and postcondition of the rule, respectively. These include network operations which make terms visible to the network adversary through In and Out facts, fresh terms in the form of Fr, and private and public key provisioning.

\newpage

\begin{lstlisting}[numbers=none, numbersep=4pt, frame=lines, language={C}, basicstyle=\ttfamily\scriptsize, caption={DH Initiator code in AutoTam}, label={lst:dh_initiator_autotam}, commentstyle=\color{codegray}, captionpos=b]
START {
    Input {
        sock = Arg_Int()
        I = Arg(8)
        I_be = htobe(I)
        sk_i = Arg_priv_sign(32)
        R = Arg(8)
        R_be = htobe(R)
        pk_r = Arg_pub_sign(32)
        pk_i = Arg_pub_sign(32)
        x = GenSecret(32)
        gx = EC25519_PublicKey(x)
    }
}

Transition (START, State1)

State1 {
    Output {
        const1 = Const(1, 1)
        m1 = concat(const1, I_be)
        m1 = concat(m1, R_be)
        m1 = concat(m1, gx)
        send(sock, m1)
    }
    Input {
        m2 = recv(sock, 148)
    }
}

Transition (State1, DONE) {
    const10 = Const(1, 10)    
    msg_type, m2_tail_1 = split(m2, 1)
    RequireEq(msg_type, const10)
    sig_msg2, msg2 = split(m2_tail_1, 64)
    Ed25519_ValidateSig(pk_r, msg2, sig_msg2)
    const2 = Const(1, 2)
    msg_type_inner, msg2_tail_1 = split(msg2, 1)
    RequireEq(msg_type_inner, const2)
    responder_name, msg2_tail_2 = split(msg2_tail_1, 8)
    RequireEq(responder_name, R_be)
    initiator_name, msg2_tail_3 = split(msg2_tail_2, 8)
    RequireEq(initiator_name, I_be)
    pk_initiator, gy = split(msg2_tail_3, 32)
    RequireEq(pk_initiator, gx)
    RequireSize(gy, 32)
    gxy = EC25519_DH(x, gy)
}

DONE {
    Output {
        const3 = Const(1, 3)
        msg3 = concat(const3, I_be)
        msg3 = concat(msg3, R_be)
        msg3 = concat(msg3, gx)
        msg3 = concat(msg3, gy)
        sig_msg3 = Ed25519_SignMessage(sk_i, msg3)
        const10 = Const(1, 10)
        m3 = concat(const10, sig_msg3)
        m3 = concat(m3, msg3)
        send(sock, m3)
        Ret(gxy)
        Dummy(I)
        Dummy(R)
        Dummy(pk_i)
        Dummy(pk_r)
    }
}
\end{lstlisting}

\newpage

\begin{lstlisting}[numbers=none, numbersep=4pt, frame=lines, language={C}, basicstyle=\ttfamily\scriptsize, caption={DH Responder code in AutoTam}, label={lst:dh_responder_autotam}, commentstyle=\color{codegray}, captionpos=b]
START {
    Input {
        sock = Arg_Int()
        R = Arg(8)
        R_be = htobe(R)
        sk_r = Arg_priv_sign(32)
        I = Arg(8)
        I_be = htobe(I)
        pk_i = Arg_pub_sign(32)
        pk_r = Arg_pub_sign(32)
        m1 = recv(sock, 49)
    }
}

Transition (START, State1){
    const1 = Const(1, 1)
    
    msg_type, m1_tail_1 = split(m1, 1)
    RequireEq(msg_type, const1)
    initiator_name, m1_tail_2 = split(m1_tail_1, 8)
    RequireEq(initiator_name, I_be)
    responder_name, gx = split(m1_tail_2, 8)
    RequireEq(responder_name, R_be)
    RequireSize(gx, 32)
}

State1 {
    Output {
        y = GenSecret(32)
        gy = EC25519_PublicKey(y)
        gxy = EC25519_DH(y, gx)
        const2 = Const(1, 2)
        msg2 = concat(const2, R_be)
        msg2 = concat(msg2, I_be)
        msg2 = concat(msg2, gx)
        msg2 = concat(msg2, gy)
        sig_msg2 = Ed25519_SignMessage(sk_r, msg2)
        const10 = Const(1, 10)
        m2 = concat(const10, sig_msg2)
        m2 = concat(m2, msg2)
        send(sock, m2)
    }
    Input {
        m3 = recv(sock, 148)
    }
}

Transition (State1, DONE) {
    const10 = Const(1, 10)    
    msg_type, m3_tail_1 = split(m3, 1)
    RequireEq(msg_type, const10)
    sig_msg3, msg3 = split(m3_tail_1, 64)
    Ed25519_ValidateSig(pk_i, msg3, sig_msg3)
    const3 = Const(1, 3)
    msg_type_inner, msg3_tail_1 = split(msg3, 1)
    RequireEq(msg_type_inner, const3)
    initiator_name, msg3_tail_2 = split(msg3_tail_1, 8)
    RequireEq(initiator_name, I_be)
    responder_name, msg3_tail_3 = split(msg3_tail_2, 8)
    RequireEq(responder_name, R_be)
    pk_initiator, pk_responder = split(msg3_tail_3, 32)
    RequireEq(pk_initiator, gx)
    RequireEq(pk_responder, gy)
}

DONE {
    Output {
        Ret(gxy)
        Dummy(I)
        Dummy(R)
        Dummy(pk_i)
        Dummy(pk_r)
    }
}
\end{lstlisting}

\begin{figure*}
\begin{lstlisting}[numbers=none, numbersep=4pt, frame=lines, language={C}, basicstyle=\ttfamily\scriptsize, caption={Generated Tamarin model for DH example without Lemmas and with Actions removed except for equality restrictions. Line breaks have been added for readability. See code repository for full example.}, label={lst:dh_tamarin_output}, commentstyle=\color{codegray}, captionpos=b]
theory SignedDH
begin

builtins: signing, diffie-hellman


rule Init_Signing_LTK:
    let sign_pk = pk(~sign_ltk) in
    [ Fr(~sign_ltk) ]
  --[]->
    [ Out(sign_pk), !SignLtk(~sign_ltk), !SignPk(sign_pk) ]

rule Leak_Signing_LTK:
    [ !SignLtk(~sign_ltk) ]
  --[ Leak(pk(~sign_ltk)) ]->
    [ Out(~sign_ltk) ]

rule Create_arg_priv:
    let dh_pk = 'g'^~dh_ltk in
    [ Fr(~dh_ltk) ]
  --[]->
    [ Out(dh_pk), !DHLtk(~dh_ltk), !DHPk(dh_pk) ]

rule Leak_DH_LTK:
    [ !DHLtk(~dh_ltk) ]
  --[ Leak('g'^~dh_ltk) ]->
    [ Out(~dh_ltk) ]

rule Initiator_START_to_State1:
    let gx = 'g'^~x
    in 
    [ In(sock), In(I), !SignLtk(~sk_i), In(R), !SignPk(pk_r), !SignPk(pk_i), Fr(~x) ]
  --[ ]->
    [ Out(<'1', I, R, gx>), InitiatorState1(sock, I, I, ~sk_i, R, R, pk_r, pk_i, ~x, gx) ]

rule Initiator_State1_to_DONE:
    let msg3222 = <'3', I_be, R_be, gx, gy>
        sig_msg3 = sign(msg3222, ~sk_i)
        m32 = <'10', sig_msg3, '3', I_be, R_be, gx, gy>
    in 
    [ InitiatorState1(sock, I, I_be, ~sk_i, R, R_be, pk_r, pk_i, ~x, gx), 
        In(<msg_type, sig_msg2, msg_type_inner, responder_name, initiator_name, pk_initiator, gy>) ]
  --[ Eq(msg_type, '10'), 
        Eq(verify(sig_msg2, <msg_type_inner, responder_name, initiator_name, pk_initiator, gy>, pk_r), true), 
        Eq(msg_type_inner, '2'), Eq(responder_name, R_be), Eq(initiator_name, I_be), Eq(pk_initiator, gx) ]->
    [ Out(m32) ]

rule Responder_START_to_State1:
    let gy = 'g'^~y
        gxy = gx^~y
    in 
    [ In(sock), In(R), !SignLtk(~sk_r), In(I), !SignPk(pk_i), !SignPk(pk_r), 
        In(<msg_type, initiator_name, responder_name, gx>), Fr(~y) ]
  --[ Eq(msg_type, '1'), Eq(initiator_name, I), Eq(responder_name, R) ]->
    [ Out(<'10', sign(<'2', R, I, gx, gy>, ~sk_r), '2', R, I, gx, gy>), 
        ResponderState1(R, R, I, I, pk_i, pk_r, gy, gxy, gx) ]

rule Responder_State1_to_DONE:
    [ ResponderState1(R, R_be, I, I_be, pk_i, pk_r, gy, gxy, gx), 
        In(<msg_type, sig_msg3, msg_type_inner, initiator_name, responder_name, pk_initiator, pk_responder>) ]
  --[ Eq(msg_type, '10'), 
        Eq(verify(sig_msg3, <msg_type_inner, initiator_name, responder_name, pk_initiator, pk_responder>, pk_i), true), 
        Eq(msg_type_inner, '3'), Eq(initiator_name, I_be), Eq(responder_name, R_be), Eq(pk_initiator, gx), 
        Eq(pk_responder, gy) ]->
    [  ]

end

\end{lstlisting}
\end{figure*}

\begin{table*}
    \centering
    \caption{Full Translation Table}
    \begin{tabular}{l c l}
        \toprule
        \textbf{Function call (AutoTam language)} & \multicolumn{2}{c}{\textbf{Translation (Tamarin)}} \\
        \cmidrule(r){2-3}
        & \textbf{Type} & \textbf{Expression} \\
        \midrule
        p = Chacha20\_Poly1305\_dec(k, n, c, d, m) & [ACTION] & Eq(aead\_val(k, n, c, d, m), true) \\
        & [SET] & p $\rightarrow$ aead\_dec(k, n, c) \\
        \midrule
        c, m = Chacha20\_Poly1305\_enc(k, n, p, d) & [SET] & c $\rightarrow$ aead\_enc(k, n, p, d), \\
        & [SET] & m $\rightarrow$ aead\_mac(k, n, p, d) \\
        gx = EC25519\_PublicKey(x) & [SET] & gx $\rightarrow$ 'g'\texttt{\^}x \\
        gxy = EC25519\_DH(x, gy) & [SET] & gxy $\rightarrow$ gy\texttt{\^}x \\
        sig = Ed25519\_SignMessage(sk, m) & [SET] & sig $\rightarrow$ sign(m, sk) \\
        Ed25519\_ValidateSig(pk, m, sig) & [ACTION] & Eq(verify(sig, m, pk), true) \\
        a = Blake2s\_keyhash(k, m, SIZE) & [SET] & a $\rightarrow$  keyhash(k, m) \\
        a = Blake2s\_hmac(k, m) & [SET] & a $\rightarrow$ hmac(k, m) \\
        x = GenSecret(SIZE) & [IN] & Fr($\sim$x) \\
		m = recv(fd, SIZE) & [IN] & In(m) \\
        m, addr = recvfrom(fd, SIZE) & [IN] & In(m), In(addr) \\
		send(fd, m) & [OUT] & Out(m) \\
        sendto(fd, addr, m) & [OUT] & Out(addr), Out(m) \\
        a = concat(b, c) & [SET] & a $\rightarrow$ $\langle$b, c$\rangle$ \\
        b, c = split(a, SIZE) & [SET] & a $\rightarrow$ $\langle$b, c$\rangle$ \\
        t = timestamp() & [SET] & t $\rightarrow$ \$t \\
        s = String('example') & [SET] & s $\rightarrow$ 'example' \\
        n = Constant(NUM, SIZE) & [SET] & n $\rightarrow$ 'NUM' \\
        RequireEq(a, b) & [ACTION] & Eq(a, b) \\
        RequireNEq(a, b) & [ACTION] & Neq(a, b) \\
        RequireSize(x, SIZE) & - & - \\
        a = size(b) & [SET] & a $\rightarrow$ \$a \\
        a = Pad16(b) & [SET] & a $\rightarrow$ b \\
        a = htobe(b) & [SET] & a $\rightarrow$ b \\
        n = Counter64() & [SET] & n $\rightarrow$ \$n \\
        m = Increment(n) & [SET] & m $\rightarrow$ \$m \\
        x = Arg(SIZE) & [IN] & In(x) \\
        x = Arg\_Int() & [IN] & In(x) \\
        x = Arg\_priv\_dh(SIZE) & [IN] & !DHLtk($\sim$x) \\
        x = Arg\_pub\_dh(SIZE) & [IN] & !DHPk(x) \\
        x = Arg\_priv\_sign(SIZE) & [IN] & !SignLtk($\sim$x) \\
        x = Arg\_pub\_sign(SIZE) & [IN] & !SignPk(x) \\
        Ret(x) & - & - \\
        Dummy(x) & - & - \\
        \bottomrule
    \end{tabular}
    \label{tab:dsl_translation}
\end{table*}

\begin{figure*}
\begin{lstlisting}[numbers=none, numbersep=4pt, frame=lines, language={C}, basicstyle=\ttfamily\small, caption={AutoTam Syntax in Backus-Naur form}, label={lst:syntax_backusnaur}, commentstyle=\color{codegray}, captionpos=b]
<Program> ::= <State> <Program> 
            | <Transition> <Program> 
            | ""

<State> ::= <StateName> { Output { <StatementList> } Input { <StatementList> } }
          | <StateName> { Output { <StatementList> } }
          | <StateName> { Input { <StatementList> } }
          | <StateName> {}
		  
<Transition> ::= Transition (<StateName>, <StateName>) { <StatementList> }
               | Transition (<StateName>, <StateName>)
			   
<StatementList> ::= <Statement> \n <StatementList>
                  | ""
				  
<Statement> ::= <VariableList> = <FunctionName>(<VariableList>)
              | <VariableList> = <FunctionName>()
                               | <FunctionName>(<VariableList>)
			  
<VariableList> ::= <VariableName>, <VariableList>
                 | <VariableName>
\end{lstlisting}
\end{figure*}

\section{Proof of Lemma 2} \label{sec:proof_lemma2}

\begin{proof}[Proof Lemma 2.]
Assume the lemma is false. Then there must exist some scenario $C$ with a transition truncated global execution trace $w_e^*$, without there being a complete global execution trace $w_e^\# \in \tau_e(C)$ so that $\Psi(w_e) = \Psi(w_e^*)$.
Then either an input state segment was partially completed in $w_e^*$, or an input state segment was completed by a participant $p$ without any following transition being performed by $p$ in $w_e^*$, or an output segment wasn't fully completed after a completed transition by some participant $p$ in $w_e*$.

If an input state segment was partially completed in $w_e^*$, or an input state segment was completed without any following transition in $w_e^*$, then there must exist a trace $w_e^\# \in \tau_e(C)$ where that input state segment was never performed. Executed input state segments do not affect the execution of other participants since no additional information becomes visible to the network. Additionally, executed input state segments are not visible on the completed transition trace. Therefore, it holds that $\Psi(w_e) = \Psi(w_e^*)$.

If an output segment wasn't fully completed after a completed transition by some participant $p$ in $w_e^*$, then requirement~\ref{req:no_stop_out} rules out all of the statements which impose restrictions in the symbolic model\footnote{In the symbolic model, we exclude network errors, leaving only equality tests (including ones imposed by signature verification and AEAD decryption) as the statements which may ''halt'' execution.}, leaving no restrictions to prevent further execution. Therefore, the same scenario must permit a trace $w_e^\# \in \tau_e(C)$ where execution continues, completing the output state segment. Any information revealed to the network in $w_e^*$ is also revealed in $w_e^\#$, thus further execution which is possible for other participants in $w_e^*$ remains possible in $w_e^\#$. Additionally, executed output state segments are not visible on the completed transition trace. Therefore, it holds that $\Psi(w_e) = \Psi(w_e^*)$.
\end{proof}

\section{Formal Syntax Definition}

A more formal definition of the AutoTam syntax is given in Listing~\ref{lst:syntax_backusnaur} in Backus-Naur form. Alphanumeric strings can be used in place of symbols ending in Name. Only a \texttt{FunctionName} which appears in the left-hand side of Table~\ref{tab:dsl_translation}, with the correct number of parameters and output variables, will be considered valid. In addition, the function translation may not be of type \texttt{IN} if placed in the Output part of a State, of type \texttt{OUT} if placed in the Input part of a State, or of either type \texttt{IN} or \texttt{OUT} when placed within a Transition. Note that transitions themselves do not have names, while states and variables do.

\end{document}